
\documentclass[11pt]{article}
\usepackage{fullpage}
\pdfoutput=1	

\usepackage[tracking=smallcaps]{microtype}	

\usepackage[utf8]{inputenc} 
\usepackage{amsmath,amsthm,amsfonts}
\usepackage{color}
\usepackage[usenames,dvipsnames] {xcolor}
\usepackage{url}
\definecolor{DarkGray}{rgb}{0.3,0.3,0.3}
\usepackage[colorlinks=true,breaklinks, linkcolor=DarkGray, citecolor=DarkGray, urlcolor=DarkGray]{hyperref}
\usepackage{tabularx}



\newcommand{\abs}[1]{\left| #1 \right|}

\def\Tr{\text{Tr}}

\newcommand{\bra}[1]{\left\langle{#1}\right\vert}
\newcommand{\ket}[1]{\left\vert{#1}\right\rangle}

\newtheorem{theorem}{Theorem}[section]
\newtheorem{lemma}[theorem]{Lemma}
\newtheorem{corollary}[theorem]{Corollary}

\newcommand{\eqnref}[1]{\hyperref[#1]{{(\ref*{#1})}}}
\newcommand{\thmref}[1]{\hyperref[#1]{{Theorem~\ref*{#1}}}}
\newcommand{\lemref}[1]{\hyperref[#1]{{Lemma~\ref*{#1}}}}
\newcommand{\corref}[1]{\hyperref[#1]{{Corollary~\ref*{#1}}}}
\newcommand{\defref}[1]{\hyperref[#1]{{Definition~\ref*{#1}}}}
\newcommand{\secref}[1]{\hyperref[#1]{{Section~\ref*{#1}}}}
\newcommand{\figref}[1]{\hyperref[#1]{{Fig.~\ref*{#1}}}}
\newcommand{\tabref}[1]{\hyperref[#1]{{Table~\ref*{#1}}}}
\newcommand{\remref}[1]{\hyperref[#1]{{Remark~\ref*{#1}}}}
\newcommand{\appref}[1]{\hyperref[#1]{{Appendix~\ref*{#1}}}}
\newcommand{\claimref}[1]{\hyperref[#1]{{Claim~\ref*{#1}}}}
\newcommand{\propref}[1]{\hyperref[#1]{{Proposition~\ref*{#1}}}}
\newcommand{\exampleref}[1]{\hyperref[#1]{{Example~\ref*{#1}}}}
\newcommand{\conjref}[1]{\hyperref[#1]{{Conjecture~\ref*{#1}}}}

\usepackage{graphicx}
\usepackage{caption}
\usepackage{subcaption}

\usepackage{bbm}
\newcommand{\Id}{I}
\newcommand{\KMMring}{\mathbb{Z}[i,\frac{1}{\sqrt 2}]}
\newcommand{\Aring}{\mathbb{Z}[i,\sqrt 2]}
\newcommand{\CNOT}{\text{CNOT}}
\newcommand{\Clifford}{\text{Clifford}}

\begin{document}
\title{Repeat-Until-Success: Non-deterministic decomposition of single-qubit unitaries}%
\author{
Adam Paetznick\\
David R. Cheriton School of Computer Science and\\
Institute for Quantum Computing,\\
University of Waterloo
\and Krysta M. Svore\\
Quantum Architectures and Computation Group,\\
Microsoft Research
}%
\maketitle
\begin{abstract}
We present a decomposition technique that uses non-deterministic circuits to approximate an arbitrary single-qubit unitary to within distance $\epsilon$ and requires significantly fewer non-Clifford gates than existing techniques.
We develop ``Repeat-Until-Success" (RUS) circuits and characterize unitaries that can be exactly represented as an RUS circuit.
Our RUS circuits operate by conditioning on a given measurement outcome and using only a small number of non-Clifford gates and ancilla qubits.
We construct an algorithm based on RUS circuits that approximates an arbitrary single-qubit $Z$-axis rotation to within distance $\epsilon$, where the number of $T$ gates scales as $1.26\log_2(1/\epsilon) - 3.53$, an improvement of roughly three-fold over state-of-the-art techniques.
We then extend our algorithm and show that a scaling of $2.4\log_2(1/\epsilon) - 3.28$ can be achieved for arbitrary unitaries and a small range of $\epsilon$, which is roughly twice as good as optimal deterministic decomposition methods.
\end{abstract}


\section{Introduction
\label{sec:introduction}
}

As quantum devices continue to mature, there is an emerging need for algorithms that can efficiently and accurately map a high-level quantum algorithm into a low-level fault-tolerant circuit representation.
The mapping of a quantum algorithm into its equivalent fault-tolerant circuit representation requires first the choice of a universal basis or gate set, and second a decomposition algorithm that can translate a quantum circuit into a sequence of gates drawn from that basis.
The choice of basis is predominantly dictated by the existence of resource-efficient, fault-tolerant quantum error correction protocols for each gate; a common set is CNOT plus the universal single-qubit basis $\{H,T\}$, where $H= \frac{1}{\sqrt{2}} \left(\begin{smallmatrix} 1&1\\1&-1 \end{smallmatrix}\right)$ and $T = \left(\begin{smallmatrix} 1&0\\0&e^{i\pi/4} \end{smallmatrix}\right)$.
For many quantum error-correcting codes, a fault-tolerant $H$ requires transversal application of the gate, and a fault-tolerant $T$ requires magic state distillation.
The cost of a $\{H,T\}$ circuit is defined to be the number of $T$ gates, given that the resource cost of a fault-tolerant $T$ gate is up to an order of magnitude larger than the resource cost of a fault-tolerant $H$ gate~\cite{Raussendorf2007a,Fowler2013}.

The decomposition algorithm should minimize the desired cost function, such as the $T$ count of the $\epsilon$-approximate gate sequence.
The Solovay-Kitaev theorem~\cite{Kitaev1997a,Kita02}, guarantees that a single-qubit unitary operation can be efficiently approximated to within error $\epsilon$ by a sequence of $O(\log^c(1/\epsilon))$ gates from a discrete universal basis, where $c = 1$ is the theoretical lower bound~\cite{Knill1995}.
Fowler gave an exponential-time algorithm that achieves the lower bound, resulting in an approximating sequence containing $2.95\log_2(1/\epsilon) + 3.75$ $T$ gates, on average~\cite{Fowl04c}.
However, the exponential time complexity limits the achievable accuracy.
A database search algorithm based on canonical forms for $\{H,T\}$ circuits was given by Bocharov and Svore \cite{Bocharov2012} that also achieves the lower bound and enables search to slightly better accuracy.
Recently, efficient algorithms that achieve the lower bound have been developed.
Kliuchnikov, Maslov and Mosca (KMM) developed an algorithm which yields $3.21\log_2(1/\epsilon) - 6.93$ $T$ gates for the rotation $R_Z(1/10)$~\cite{Kliuchnikov2012b}.
Selinger's algorithm $\epsilon$-approximates a single-qubit $Z$-axis rotation, $R_Z(\theta) = \left(\begin{smallmatrix} 1&0\\0&e^{i\theta} \end{smallmatrix}\right)$, using  $4\log_2(1/\epsilon) + 11$ $T$ gates in the worst case~\cite{Selinger2012a}. Subsequent improvement by Ross and Selinger yields a scaling of $3\log(1/\epsilon) + O(\log\log(1/\epsilon))$ in typical cases~\cite{Ross2014}.

For a given single-qubit unitary $U$ and error $\epsilon$, the above algorithms output a fixed sequence of single-qubit gates from the set $\{H,T\}$, without the use of ancillary qubits or measurements.
In this paper, we present a circuit framework and algorithm to minimize the $T$ gates required to approximate a given single-qubit unitary.
We show that by incorporating ancilla qubits and measurements, the expected number of $T$ gates required to approximate a random $Z$-axis rotation can be significantly reduced to
\begin{equation}
\label{eq:axial-tcount-scaling}
\text{Exp}_Z[T] = 1.26\log_2(1/\epsilon) - 3.53
\enspace ,
\end{equation}
an improvement of roughly three-fold over~\cite{Selinger2012a} and more than two-fold over~\cite{Fowl04c}, \cite{Kliuchnikov2012b} and~\cite{Ross2014}.
For arbitrary single-qubit unitaries, our results indicate a significantly reduced $T$-count scaling of
\begin{equation}
\label{eq:nonaxial-tcount-scaling}
\text{Exp}_U[T] = 2.4\log_2(1/\epsilon) - 3.28
\enspace ,
\end{equation}
roughly $50$ percent better than using~\eqnref{eq:axial-tcount-scaling} for each $Z$ rotation (three are required in general) and up to four-fold better than traditional ancilla-free decomposition.

Our circuits are distinct from those output by Fowler, KMM and Selinger in that they are \textit{non-deterministic}.
Each circuit, when conditioned on a particular measurement outcome, exactly implements a desired unitary, and otherwise implements a unitary that can be reversed at little or no cost; it can then be repeated until the desired unitary is obtained.
We call our circuits ``Repeat-Until-Success" (RUS) circuits.
A significant advantage of RUS circuits is the extremely low resource cost, in non-Clifford gates and ancillary qubits.

Our paper is structured as follows.
We begin in \secref{sec:prev} by discussing existing single-qubit unitary decomposition techniques, and the presence of RUS circuits in previous work.
We then characterize unitaries that can be exactly implemented ($\epsilon=0$) as an RUS circuit in \secref{sec:rus}.
In \secref{sec:search}, we present an optimized direct search algorithm for synthesizing RUS circuits with extremely low $T$ count and in \secref{sec:results}, we construct a corresponding database of RUS circuits.
Leveraging our database, we develop a decomposition algorithm to approximate a given unitary using compositions of RUS circuits in \secref{sec:applications}.
We then present a variety of applications of RUS circuits, including a circuit for the $V_3$ gate that results in state-of-the-art single-qubit decomposition.
Finally, we discuss future directions and open problems in \secref{sec:concl}.

\section{Existing methods for single-qubit unitary decomposition}
\label{sec:prev}
In addition to the techniques discussed above~\cite{Dawson2005,Fowl04c,Kliuchnikov2012b,Selinger2012a,Ross2014}, a variety of other methods for single-qubit unitary decomposition have been developed.
So-called ``phase kickback'' involves preparing a special ancilla state based on the quantum Fourier transform and then using phase estimation~\cite{Kita02}.
Non-deterministic circuits called ``programmable ancilla rotations''(PAR) use a cascading set of prepared ancilla states along with gate teleportation~\cite{Jones2012}.
Similar use of non-deterministic circuits to produce a ``ladder'' of non-stabilizer states, and in turn to approximate an arbitrary single-qubit unitary, has also been proposed \cite{Duclos-Cianci2012}.
The number of $T$ gates required for these ancilla-based methods is larger than for ancilla-free methods, but the total resources are comparable in some architectures~\cite{Jones2013b}.
For this reason, we compare our results to the Fowler, KMM, Selinger, and Ross-Selinger methods.

Non-deterministic circuits have also been proposed for decomposition into alternate gate sets.
Bocharov, Gurevich and Svore (BGS) showed that arbitrary single-qubit unitaries can be approximated using the gate set $\{H,S=T^2,V_3\}$, where $V_3 = (I + 2iZ)/\sqrt{5}$, with a typical scaling of $3\log_5(1/\epsilon)$
in the number of $V_3$ gates~\cite{Bocharov2013}.  They suggest a fault-tolerant implementation of the $V_3$ gate (see ~\figref{fig:v3-toffoli}) using an RUS circuit which requires eight $T$ gates, four for each Toffoli (see~\cite{Jones2012d}).  Later, Jones improved this circuit, using only a single Toffoli gate~\cite{Jones2013b}.
Using our optimized direct search algorithm, we find an improved RUS circuit for $V_3$ that uses only four $T$ gates, as shown in~\figref{fig:v3-5.26}, and is exact ($\epsilon=0$).
By contrast, an approximation to within $\epsilon = 10^{-6}$ using the KMM algorithm requires $67$ $T$ gates.
Furthermore, when used to implement $V_3$, our circuit results in $\{H,S,V_3\}$-decomposition achieving substantially lower $T$ count (on average) than $\{H,T\}$-decomposition methods.

Repeat-until-success circuits have also been used by Wiebe and Kliuchnikov \cite{Wiebe2013}, who proposed a family of tree-like, hierarchical RUS circuits that yield $T$ counts superior to Selinger and KMM for small-angle $Z$-axis rotations.
In contrast, our results show that RUS circuits can be used for large- and small-angle $Z$-axis rotations, as well as rotations about an arbitrary axis.
We also provide a general characterization of RUS circuits, and a general framework for their construction.

A summary of the $T$ count costs of our method, labeled RUS, and the above algorithms is given in Tables~\ref{tab:non-axial-decomposition-methods} and \ref{tab:axial-decomposition-methods} for non-axial and axial rotations, respectively.

\begin{table}
\centering
\begin{tabular}{>{\raggedright\arraybackslash}m{2.7cm}|>{\raggedright\arraybackslash}m{4.2cm}|>{\raggedright\arraybackslash}m{3.5cm}|>{\raggedright\arraybackslash}m{4.2cm}}
\hline
\textbf{Method}        & \textbf{Description} & \textbf{$T$ count} & \textbf{Comments}\\
\hline
Solovay-Kitaev \cite{Dawson2005}& Converging $\epsilon$-net \qquad\qquad based on group commutators. & $O(\log^{3.97}1/\epsilon)$& Computationally efficient, but sub-optimal $T$ count.\\
\hline
Ladder states \cite{Duclos-Cianci2012}& Hierarchical distillation based $\ket H$ states.& $O(\log^{1.75}1/\epsilon)$& Some of the cost can be shifted ``offline".\\
\hline
Direct search \cite{Fowl04c,Bocharov2012} & Optimized exponential-time search.& $2.95\log_2(1/\epsilon)+3.75$&Optimal ancilla-free $T$ count.\\
\hline
BGS \cite{Bocharov2013} & Direct search decomposition with $V_3$. & $T_V(3 \log_5 1/\epsilon)$& $T_V$ is the $T$ count for choice of fault-tolerant implementation of $V_3$.\\
\hline
\textbf{RUS} (non-axial)&Database lookup.& $2.4\log_2(1/\epsilon) - 3.28$ &Limited approximation accuracy.\\
\hline
\end{tabular}
\caption[Decomposition methods for arbitrary single-qubit unitaries.]{\label{tab:non-axial-decomposition-methods}
Decomposition methods for arbitrary single-qubit unitaries using the gate set $\{H,S,T\}$.
}
\vspace{1cm}
\begin{tabular}{>{\raggedright\arraybackslash}m{2.7cm}|>{\raggedright\arraybackslash}m{4.2cm}|>{\raggedright\arraybackslash}m{3.5cm}|>{\raggedright\arraybackslash}m{4.2cm}}
\hline
\textbf{Method}        & \textbf{Description} & \textbf{$T$ count} & \textbf{Comments}\\
\hline
Phase kickback \cite{Kita02}&Uses Fourier states and phase estimation.&$O(\log 1/\epsilon)$ ~~\qquad (implementation dependent)&$O(\log 1/\epsilon)$ ancillas. Optimizations make it cost competitive with Selinger and KMM.\\
\hline
PAR \cite{Jones2012}& Cascading gate teleportation.& $O(\log 1/\epsilon)$&Constant depth (on average), higher $T$ count than phase kickback.\\
\hline
Selinger \cite{Selinger2012a}&Round-off followed by exact decomposition.& $4\log(1/\epsilon)+11$&$T$ count is optimal for worst-case rotations.\\
\hline
Ross-Selinger \cite{Ross2014}&Round-off followed by exact decomposition.& $3\log(1/\epsilon) + O(\log\log 1/\epsilon)$& $T$ count is near-optimal for typical rotations.\\
\hline
KMM \cite{Kliuchnikov2012b}& Round-off followed by exact decomposition.& $3.21\log_2(1/\epsilon)-6.93$& $T$ count based on scaling for $R_Z(1/10)$.\\
\hline
Floating-point \cite{Wiebe2013}& A family of tree-like RUS circuits&$1.14\log_2(10^\gamma) + 8\log_2(10^{-\gamma}/\epsilon)$& For small angle $\theta = a\times 10^{-\gamma}$, $T$ count is roughly $1.14\log_2(1/\theta)$.\\
\hline
\textbf{RUS} (axial)& Database lookup.& $1.26\log_2(1/\epsilon)-3.53$&Approximation to within $\epsilon=10^{-6}$.\\
\hline
\end{tabular}
\caption[Decomposition methods for $Z$-axis rotations.]{\label{tab:axial-decomposition-methods}
Decomposition methods for $Z$-axis rotations using the gate set $\{H,S,T\}$.  Approximation of an arbitrary single-qubit unitary is possible by using the relation $U = R_Z(\theta_1)H R_Z(\theta_2) H R_Z(\theta_3)$.
}
\end{table}

RUS circuits have been considered in other contexts, as well.
The term was first used by~\cite{Lim2004} to describe the implementation of a CZ gate by repeated operations in linear optics.
More recently, \cite{Shah2013} adapted deterministic ancilla-driven methods~\cite{Anders2009,Kashefi2009} to allow for non-determinism.
Our use of repetition is similar to~\cite{Lim2004} and~\cite{Shah2013}, but we generate a family of circuits each of which are intended for use in conjunction with a fault-tolerant gate set, rather than at the physical level.

\section{Repeat-Until-Success circuits
\label{sec:rus}
}

To describe RUS circuits, we begin with an example.
Consider the circuit shown in~\figref{fig:v3-toffoli}, which performs the single-qubit unitary $V_3 = (I + 2iZ)/\sqrt{5}$.
This circuit involves two measurements in the Pauli $X$-basis.  If both measurement outcomes are zero, then the output is equivalent to $V_3\ket\psi$.  If any other outcome occurs, then the output is $I\ket\psi = \ket\psi$.  Thus, the circuit may be repeated until obtaining the all zeros outcome, and the number of repetitions will vary according to a geometric probability distribution. (In this case the probability of getting both zeros is $5/8$.)  Upon measuring all zeros, the unitary $V_3$ is implemented $\emph{exactly}$, even though the overall circuit is non-deterministic.

\begin{figure}[tb]
\centering
\begin{subfigure}[b]{.3\textwidth}
\centering
\includegraphics[scale=1]{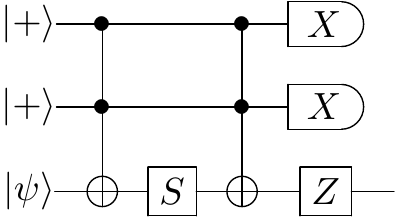}
\caption{\label{fig:v3-toffoli}
$\text{Exp}[$T$]=12.8$}
\end{subfigure}
\hfill
\begin{subfigure}[b]{.3\textwidth}
\centering
\includegraphics[scale=1]{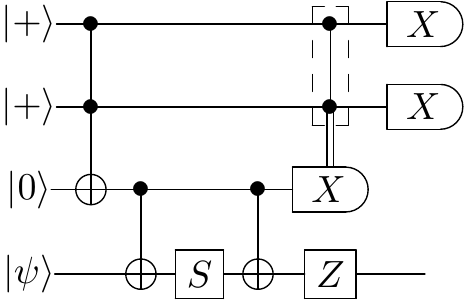}
\caption{\label{fig:v3-jones}
$\text{Exp}[$T$]=6.4$}
\end{subfigure}
\hfill
\begin{subfigure}[b]{.3\textwidth}
\centering
\includegraphics[scale=1]{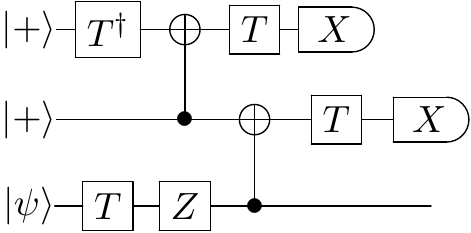}
\caption{\label{fig:v3-5.26}
$\text{Exp}[$T$] < 5.26$}
\end{subfigure}
\caption{
\label{fig:v3-circuits}
Repeat-Until-Success circuits for $V_3 = (I + 2iZ)/\sqrt{5}$.
Each of the circuits above implements $V_3$ conditioned on an $X$-basis measurement outcome of zero on each of the top two ancilla qubits. If any other measurement outcome occurs, then each circuit implements the identity. The probability of measuring $00$ is $5/8$ for each circuit.  Repeating the circuit until success yields an expectation value for the number of $T$ gates, as indicated.
(a) A slight modification of the circuit presented in~\cite{Nielsen2000} pp. $198$. Each Toffoli gate can be implemented with four $T$ gates (see~\cite{Jones2012d}).
(b) A circuit proposed by Jones that requires just a single Toffoli gate~\cite{Jones2013b}.
(c) An alternative circuit found by direct search. Measurement of the first qubit can be performed before interaction with the data qubit.  Thus the top-left part of the circuit can be repeated until measuring zero.  The probability of measuring zero on the first qubit is $3/4$.  The probability of measuring zero on the second qubit, conditioned on zero outcome of the first qubit, is $5/6$.  The $T$ gate applied directly to $\ket\psi$ can be freely commuted through the CNOT.  In the case that an even number of attempts are required, the $T$ gates can be combined into the Clifford gate $T^2 = S$.
}
\end{figure}

\begin{figure}[tb]
\centering
\includegraphics{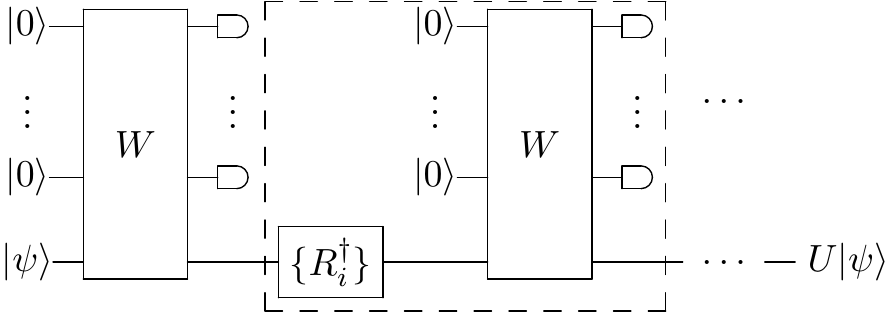}
\caption{\label{fig:rus-circuit}
A Repeat-Until-Success circuit that implements the unitary $U$.  Ancilla qubits are prepared in $\ket 0$, then the unitary $W$ is performed on both the ancillas and $\ket\psi$.  Upon measuring the ancillas, a unitary operation is effected on $\ket\psi$ which is either $U$ or one of $\{R_i\}$, depending on the measurement outcome.  If the measurement outcome indicates $R_i$, then the recovery operation $R_i^\dagger$ is performed, and the process can be repeated.
}
\end{figure}

We define a Repeat-Until-Success (RUS) circuit over a gate set $G$ to be of the following general structure:
\begin{enumerate}
\itemsep 1pt
 \parskip 0pt
\item Prepare $m$ ancilla qubits in the state $\ket{0^m}$.
\item Given an input state $\ket\psi$ on $n$ qubits, apply a unitary $W$ to all of the $n + m$ qubits using gates from $G$.
\item Measure each ancilla qubit in the computational basis.  The output is given by $\Phi_i\ket\psi$, where $\Phi_i$ is a quantum channel on $n$ qubits that depends on the measurement outcome $i \in \{0,1\}^m$.
\item If the measurement outcome indicates ``failure", apply a recovery operation and repeat.
\end{enumerate}

The measurement outcomes are partitioned into two sets: ``success'' and ``failure''.
Success corresponds to some set of desired operations $\{\Phi_i : i \in \text{success}\}$;
failure corresponds to some set of undesired operations $\{\Phi_i : i \in \text{failure}\}$.
In the case of success, no further action is required.
In the case of failure $i$, a recovery operation $\Phi_i^{-1}$ is applied, and the circuit is repeated.
For practical purposes, the recovery operations should be implementable for relatively low cost compared to $W$.

We restrict to the case in which $\ket\psi$ is a single qubit and the $\{\Phi_i\}$ are unitary.
We also limit to a single ``success'' output $U\ket\psi$, for some unitary $U$, though $U$ may correspond to multiple measurement outcomes.
The operation $W$ is then given by a $2^{m+1}\times 2^{m+1}$ unitary matrix of the form
\begin{equation}
\label{eq:nondeterministic-unitary-form}
W =
\frac{1}{\sqrt{\sum_i \abs{\alpha_i}^2}}
\begin{pmatrix}
\alpha_0 U & \ldots \\
\alpha_1 R_1 & \ddots \\
\vdots & \\
\alpha_l R_{l} &\\
\end{pmatrix}
\enspace ,
\end{equation}
where $U,R_1,\ldots,R_l$ are $2\times 2$ unitary matrices, and $\alpha_0,\ldots,\alpha_l \in \mathbb{C}$ are scalars.
Since the ancillas are prepared in $\ket{0^m}$, only the first two columns of $W$ are of consequence.
Contents of the remaining columns are essentially unrestricted, except that $W$ must be unitary.
Each of the $l + 1 = 2^{m}$ measurement outcomes corresponds to application of a unitary from $U \cup \{R_i\}$ on the input qubit $\ket\psi$.  Without loss of generality, we select the all zeros outcome to correspond with application of $U$, since outcomes can be freely permuted.
The entire protocol is illustrated in~\figref{fig:rus-circuit}.


To ensure compatibility with existing fault-tolerance schemes, we require that $W$ can be synthesized using the gate set $G=\{\Clifford,T\}$, where $\Clifford$ denotes the Clifford group generated by $\{H, S, \CNOT\}$; note that our framework and algorithms can be extended to other gates sets with little difficulty.
A unitary matrix is exactly implementable by $\{\Clifford,T\}$ if and only if its entries are contained in the ring extension $\KMMring$~\cite{Giles2012}.  Thus, we require that $\alpha_0 U,\alpha_1 R_1, \ldots,\alpha_l R_l \in \KMMring$.
Furthermore, the normalization $1/\sqrt{\sum_i \abs{\alpha_i}^2}$ must also be in the ring.
The unitarity condition on $W$ then requires that
\begin{equation}
\label{eq:HT-unitarity-condition}
\sum_i \abs{\alpha_i}^2 = 2^{k}
\end{equation}
for some integer $k$.

If all of the recovery operations $R_1,\ldots,R_l$ are exactly implementable by $\{\Clifford,T\}$, then we may assume that $\alpha_1,\ldots,\alpha_l\in \KMMring$.
If $\alpha_0$ is an integer, then Lagrange's four-square theorem implies that~\eqnref{eq:HT-unitarity-condition} can be satisfied using at most $m=2$ ancilla qubits.

\subsection{Characterization
\label{rus.characterization}
}
Consider a $2\times 2$ unitary matrix $U$ such that
\begin{equation}
\label{eq:rus-unitary}
U = \begin{pmatrix} u_{00} & u_{01} \\ u_{10} & u_{11}\end{pmatrix}
  = \frac{1}{\sqrt{2^k}\alpha}\begin{pmatrix} \beta_{00} & \beta_{01} \\ \beta_{10} & \beta_{11}\end{pmatrix}
  \enspace ,
\end{equation}
for $\alpha \in \mathbb{R}$, $\beta_{00},\ldots,\beta_{11} \in \Aring$ and integer $k \geq 0$.
We are concerned with exactly implementing $U$ only up to a global unit phase $e^{i \phi}$ for some $\phi \in [0,2\pi)$. Accordingly, we may assume without loss of generality that $\alpha$ is real and non-negative since for any $\beta \in \mathbb{C}$, $\frac{\beta \beta^*}{\abs{\beta}} \geq 0$.
The restriction to $\Aring$ rather than $\KMMring$ is also without loss of generality, since $k$ can be chosen to eliminate any denominators.
Then choosing $\alpha_0=\sqrt{2^k}\alpha$ we have
\begin{equation}
\label{eq:rus-scalar-condition}
\alpha_0 = \sqrt{\abs{\beta_{00}}^2 + \abs{\beta_{10}}^2} = \sqrt{x + y\sqrt 2}
\enspace ,
\end{equation}
where $x = a_{00}^2 + c_{00}^2 + a_{10}^2 + c_{10}^2 + 2(b_{00}^2 + d_{00}^2 + b_{10}^2 + d_{10}^2)$, $y = a_{00}b_{00} + c_{00}d_{00} + a_{10}b_{10} + c_{10}d_{10}$ for integers $a_{00}$, $b_{00}$, $c_{00}$, $d_{00}$, $a_{10}$, $b_{10}$, $c_{10}$, $d_{10}$.

Any target unitary $U$ must have this form due to~\eqnref{eq:nondeterministic-unitary-form}.
In other words, the \emph{only} unitaries that can be obtained by $\{\Clifford,T\}$ circuits of the form shown in~\figref{fig:rus-circuit} are those that can be expressed by entries in $\Aring$ after multiplying by a scalar.
Nonetheless, this restricted class can be used to approximate arbitrary unitaries more efficiently than unitaries limited to $\KMMring$, as we show in~\secref{sec:results} and~\secref{sec:applications}.

\subsection{Success probability and expected cost
\label{sec:rus.cost}
}
The success probability, i.e., the probability of obtaining the zero outcome for all ancilla measurements, can be computed from~\eqnref{eq:HT-unitarity-condition} and is given by
\begin{equation}
\Pr[\text{success}] = \frac{\alpha_0^2}{2^k} \leq \frac{\alpha_0^2}{2^{\lceil 2\log_2 \alpha_0 \rceil}}
\enspace ,
\end{equation}
where since $\alpha_0^2 < 2^k$, we may use $k \geq \lceil 2\log_2 \alpha_0 \rceil$.
The circuits in~\figref{fig:v3-circuits}, for example, each yield a value of $\alpha_0=\sqrt{5}$ and therefore a success probability of $5/8$.
If $U$ appears multiple times in~\eqnref{eq:nondeterministic-unitary-form}, then we have
\begin{equation}
\Pr[\text{success}] = \frac{t \alpha_0^2}{2^k} \leq \frac{t \alpha_0^2}{2^{\lceil \log_2 t \alpha_0^2 \rceil}}
\enspace ,
\end{equation}
where $t$ is the number of times that $U$ appears.
This upper bound can be made arbitrarily close to one for large enough $t$.

The expected number of repetitions required in order to achieve success is given by a geometric distribution with expectation value $1/p$, and variance $(1-p)/p^2$, where $p = \Pr[\text{success}]$.
If $C(W)$ is the cost of implementing the unitary $W$, then the expected cost of the RUS circuit is given by $C(W)/p$ with a variance of $C(W)(1-p)/p^2$.  The resources required to implement a $\{\Clifford,T\}$ fault-tolerant circuit are often dominated by the cost of implementing the $T$ gate.  We therefore define $C(W)$ as the number of $T$ gates in the circuit used to implement $W$.

The $T$-gate count is not the only reasonable cost function.  Other possibilities include circuit size, width, area or volume, or the total number of measurements.  The utility of a particular cost function varies depending on the target quantum computing architecture.  For architectures that use the surface code, for example, total volume can be a more complete metric than $T$ count~\cite{Fowler2013,Jones2013b}.

Here we choose to use $T$-gate count as the cost function because it is simple, and is consistent with other $\{\Clifford, T\}$-decomposition algorithms~\cite{Kliuchnikov2012,Amy2012,Selinger2012a,Kliuchnikov2012b,Wiebe2013,Gosset2013a,Ross2014}.
However, RUS circuits require techniques not present in the circuits produced by previous decomposition methods, such as rapid classical feedback and control, and active synchronization due to variable time scales per RUS circuit.
Thus, while $T$ count allows for direct comparison of RUS circuits with other methods, a more complete metric may be required in the future for resource calculations on a particular hardware architecture.

\subsection{Amplifying the success probability
\label{sec:rus.amplify}
}

The action of the multi-qubit unitary $W$ may be described by
\begin{equation}
\label{eq:rus-unitary-on-psi}
W\ket{0^m}\ket\psi = \sqrt{p}\ket{0^m}U\ket\psi + \sqrt{1-p}\ket{\Phi^\perp}
\enspace ,
\end{equation}
where $\ket{\Phi^\perp}$ is a state that depends on $\ket\psi$ and satisfies $(\ket{0^m}\bra{0^m}\otimes\Id)\ket{\Phi^\perp} = 0$.  That is, $W$ outputs a state which has amplitude $\sqrt{p}$ on the ``success'' subspace, and amplitude $\sqrt{1-p}$ on the ``failure'' subspace.  We show that in some cases we may apply amplitude amplification to boost the success probability and reduce the expected $T$ count of an RUS circuit.

Traditional amplitude amplification~\cite{Brassard2000} proceeds by applying the operator $(RS)^j$ on the initial state $W\ket{0^m}\ket\psi$ for some integer $j > 0$ and reflections
\begin{equation}\begin{split}
\label{eq:amp-amp-reflections}
S &= \Id - 2\ket{0^m}\ket\psi \bra{0^m}\bra{\psi}, \\
R &= WSW^\dagger = \Id - 2W\ket{0^m}\ket\psi \bra{0^m}\bra{\psi}W^\dagger
\enspace .
\end{split}\end{equation}
In the two-dimensional subspace spanned by $\{\ket{0^m}U\ket\psi, \ket{\Phi^\perp}\}$, $RS$ acts as a rotation by $2\theta$ where $\sin(\theta) = \sqrt{p}$.  Therefore $(RS)^j (W\ket{0^m}\ket\psi) = \sin((2j+1)\theta)\ket{0^m}U\ket\psi + \cos((2j+1)\theta)\ket{\Phi^\perp}$.  The goal then is to choose $j$ appropriately so as to minimize the expected number of $T$ gates.

The problem in this case is that $\ket\psi$ is unknown, and therefore we cannot directly implement $S$.
We can, however, implement
\begin{equation}
S' = \overline{\text{CZ}}(m) \otimes I
\enspace ,
\end{equation}
where $\overline{\text{CZ}}(m) = X^{\otimes m}\text{CZ}(m)X^{\otimes m}$ and $\text{CZ}(m)$ is the generalized controlled-$Z$ gate on $m$ qubits defined by
\begin{equation}
\text{CZ}(m) \ket{x_1,x_2,\ldots,x_m} = (-1)^{x_1 x_2\ldots x_m} \ket{x_1,x_2,\ldots,x_m}
\enspace .
\end{equation}
We could, therefore, apply $(WS'W^\dagger S')^j$ instead of $(RS)^j$.

In the case $m=1$ (one ancilla qubit) this procedure corresponds to so-called ``oblivious'' amplitude amplification.
\begin{lemma}[Oblivious amplitude amplification on $n+1$ qubits~\cite{Berry2013a}]
\label{lem:oblivious-amp-amp}
Consider a unitary $W$ that satisfies~\eqnref{eq:rus-unitary-on-psi} for $m=1$.
Let $S_1 := Z \otimes I$.  Then for any $j \in \mathbb{Z}$,
\begin{equation}
  (-W S_1 W S_1)^j W \ket{0}\ket\psi = \sin((2j+1)\theta) \ket{0}U\ket\psi + \cos((2j+1)\theta) \ket{1}\ket\phi
  \enspace ,
\end{equation}
where $\sin(\theta) = \sqrt{p}$.
\end{lemma}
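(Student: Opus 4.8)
The plan is to prove the statement by the standard Grover-style argument: exhibit a two-dimensional subspace invariant under the relevant operators, show that the iterate acts on it as a rotation by $2\theta$, and read off the amplitudes after $j$ steps from the angle-addition formulas. One preliminary point must be settled first, since it is the source of the apparent $W$-versus-$W^\dagger$ issue. For the reflection argument the per-step operator must be $-RS_1$ with $R := WS_1W^\dagger$, because $S_1^2=I$ makes $WS_1W^\dagger$ an involution (hence a genuine reflection), whereas $WS_1W$ is \emph{not} an involution and does not preserve any two-dimensional subspace: applying the hypothesis to the input $U\ket\psi$ shows $W\ket0 U\ket\psi$ has a nonzero $\ket0 U^2\ket\psi$ component, so $W$ (as opposed to $W^\dagger$) leaks out of the relevant subspace. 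I therefore read the second factor of $(-WS_1WS_1)^j$ as $W^\dagger$, making the iterate $(-WS_1W^\dagger S_1)^j$; this is exactly the oblivious amplitude amplification operator of~\cite{Berry2013a}, and it is what I analyze.

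First I would use the fact that~\eqnref{eq:rus-unitary-on-psi} holds for \emph{every} input $\ket\psi$ with the \emph{same} $\theta$ to extract the block structure of $W$ in the ancilla qubit. Writing $W_{ij} := (\bra i\otimes I)W(\ket j\otimes I)$, the relation $W\ket0\ket\psi = \sin\theta\,\ket0 U\ket\psi + \cos\theta\,\ket1\ket\phi$ for all $\ket\psi$ forces $W_{00}=\sin\theta\,U$ and $W_{10}=\cos\theta\,V_1$ for some unitary $V_1$ with $\ket\phi = V_1\ket\psi$. Set $P := \ket0\bra0\otimes I$, $\ket a := \ket0 U\ket\psi$, $\ket\Phi := \ket1\ket\phi$, and $\mathcal H_2 := \mathrm{span}\{\ket a,\ket\Phi\}$; these two states are orthonormal, and $v_0 := W\ket0\ket\psi = \sin\theta\,\ket a + \cos\theta\,\ket\Phi \in \mathcal H_2$.

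Next I would identify the two reflections acting on $\mathcal H_2$. Since $S_1 = Z\otimes I = 2P-I$, it is $+1$ on the success subspace and $-1$ on its complement, so $S_1\ket a=\ket a$ and $S_1\ket\Phi=-\ket\Phi$: the reflection fixing $\ket a$. Likewise $R = WS_1W^\dagger = 2\,WPW^\dagger - I$ is the reflection fixing $v_0$ (as $v_0\in W\cdot\mathrm{range}(P)$). The one genuine verification is that $\mathcal H_2$ is $R$-invariant: from $W_{00}=\sin\theta\,U$ one gets $PW^\dagger\ket a = \sin\theta\,\ket0\ket\psi$, hence $R\ket a = 2\sin\theta\,v_0 - \ket a\in\mathcal H_2$, while $Rv_0=v_0$. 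In the orthonormal basis $\{\ket a,\ket\Phi\}$ the two reflections become $2\times2$ reflection matrices whose axes meet at angle $\tfrac\pi2-\theta$, so $-RS_1$ is the rotation $\bigl(\begin{smallmatrix}\cos2\theta & \sin2\theta\\ -\sin2\theta & \cos2\theta\end{smallmatrix}\bigr)$. Applying it $j$ times to the coordinate vector $(\sin\theta,\cos\theta)$ of $v_0$ and expanding via the angle-addition formulas yields $\sin((2j+1)\theta)\ket a + \cos((2j+1)\theta)\ket\Phi$, the claimed identity; the same computation runs for negative $j$ since $-RS_1$ is invertible, covering all $j\in\mathbb Z$.

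The step I expect to be the main obstacle --- and the one carrying all the content --- is the $R$-invariance of $\mathcal H_2$, i.e.\ that conjugating the ancilla reflection by $W$ does not leak out of the span. This is precisely where obliviousness (the same $\theta$ for all $\ket\psi$) is indispensable: it pins $W_{00}=\sin\theta\,U$ exactly, which is what makes $PW^\dagger\ket a$ equal $\sin\theta\,\ket0\ket\psi$ rather than some unrelated state, and which would fail for the literal $WS_1W$. I would first nail down all signs and normalizations by verifying the $j=1$ case by direct block multiplication (it returns $\sin3\theta\,\ket a + \cos3\theta\,\ket\Phi$, and reassuringly the free ``gauge'' block of $W$ cancels), and only then invoke the rotation picture for general $j$.
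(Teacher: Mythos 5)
You are right that the printed $(-WS_1WS_1)^j$ is a typo for $(-WS_1W^\dagger S_1)^j$ --- the paper itself writes $(WS'W^\dagger S')^j$ in the surrounding text and in \corref{lem:amp-amp-cz} --- and your proof of the corrected statement is sound, including the correct observation that the undaggered version leaks out of the two-dimensional subspace. The paper does not prove this lemma directly (it cites Berry et al.), but its proof of the $m$-ancilla corollary is the same 2D-subspace argument you give: your key identity $PW^\dagger(\ket{0}U\ket\psi)=\sin\theta\,\ket0\ket\psi$, whence $R\ket a = 2\sin\theta\,v_0-\ket a$, is exactly the paper's computation $W^\dagger(\ket{0^m}U\ket\psi)=\sqrt p\,\ket{0^m}\ket\psi+\sqrt{1-p}\,\ket{\Psi^\perp}$, merely phrased as a two-reflection rotation rather than explicit state bookkeeping.
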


In fact, oblivious amplitude amplification can be generalized to accommodate any number of ancilla qubits.
\begin{corollary}[Oblivious amplitude amplification on $n+m$ qubits]
\label{lem:amp-amp-cz}
Consider a unitary $W$ that satisfies~\eqnref{eq:rus-unitary-on-psi}.
Oblivious amplitude amplification on $\ket{0^m}U\ket\psi$ can be performed using the operator $WS'W^\dagger S'$, where $S' = \overline{\text{CZ}}(m)\otimes I$.
More precisely, for any $j \in \mathbb Z$
\begin{equation}
(-WS'W^\dagger S')^j (W\ket{0^m}\ket\psi) = \sin((2j+1)\theta)\ket{0^m}U\ket\psi + \cos((2j+1)\theta)\ket{\Phi^\perp}
\enspace ,
\end{equation}
where $\sin(\theta) = \sqrt{p}$.
\end{corollary}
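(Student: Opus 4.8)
The plan is to show that, inside the two–dimensional subspace $\mathcal{H}_2 := \mathrm{span}\{\ket{0^m}U\ket\psi,\ket{\Phi^\perp}\}$ singled out by \eqnref{eq:rus-unitary-on-psi}, both $S'$ and $R' := WS'W^\dagger$ act as reflections, so that $-R'S' = -WS'W^\dagger S'$ restricts to a rotation by $2\theta$; the stated formula then follows by the same induction on $j$ used for \lemref{lem:oblivious-amp-amp}. The entire content of the generalization from $m=1$ to arbitrary $m$ sits in verifying that $S' = \overline{\text{CZ}}(m)\tensor\Id$ plays the role that $Z\tensor\Id$ plays for a single ancilla. First I would record the action of $S'$: since $\overline{\text{CZ}}(m) = X^{\otimes m}\text{CZ}(m)X^{\otimes m} = \Id - 2\ket{0^m}\bra{0^m}$, we have $S' = \Id - 2P$ with $P = \ket{0^m}\bra{0^m}\tensor\Id$ the projector onto the ``all–ancillas–zero'' subspace. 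Writing $\ket g := \ket{0^m}U\ket\psi$ and $\ket b := \ket{\Phi^\perp}$, the defining property $(\ket{0^m}\bra{0^m}\tensor\Id)\ket{\Phi^\perp} = 0$ gives $P\ket g = \ket g$ and $P\ket b = 0$, so $S'\ket g = -\ket g$ and $S'\ket b = \ket b$. Thus on $\mathcal H_2$ the operator $S'$ is the reflection fixing $\ket b$ and flipping $\ket g$, exactly as for $m=1$.

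Next I would show that $R' = WS'W^\dagger = \Id - 2WPW^\dagger$ is the reflection of $\mathcal H_2$ fixing $\ket{\chi^\perp}$ and flipping $\ket\chi$, where $\ket\chi := W\ket{0^m}\ket\psi = \sin\theta\ket g + \cos\theta\ket b$ and $\ket{\chi^\perp} := \cos\theta\ket g - \sin\theta\ket b$, with $\sin\theta = \sqrt p$. Since $WPW^\dagger$ is the orthogonal projector onto $\mathcal V := W\big(\,\ket{0^m}\tensor(\text{data space})\big)$ and $\ket\chi\in\mathcal V$, the identity $R'\ket\chi = -\ket\chi$ is immediate (equivalently $S'W^\dagger\ket\chi = S'\ket{0^m}\ket\psi = -\ket{0^m}\ket\psi$). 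The real content is the complementary identity $R'\ket{\chi^\perp} = \ket{\chi^\perp}$, which amounts to $\ket{\chi^\perp}\perp\mathcal V$.

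This last point is where the \emph{obliviousness} of $W$ is genuinely needed, and I expect it to be the main obstacle. The key is that \eqnref{eq:rus-unitary-on-psi} holds for \emph{every} input with the same $U$, $p$, and $\theta$. Applying it to an orthonormal data basis $\{\ket{e_k}\}$ and writing $\ket{\chi_{e_k}} := W\ket{0^m}\ket{e_k} = \sqrt p\,\ket{0^m}U\ket{e_k} + \sqrt{1-p}\,\ket{\Phi^\perp_{e_k}}$, unitarity of $W$ together with $U^\dagger U = \Id$ and the success/failure orthogonality gives $\langle e_k|\psi\rangle = \langle\chi_{e_k}|\chi\rangle = p\langle e_k|\psi\rangle + (1-p)\langle\Phi^\perp_{e_k}|\Phi^\perp_\psi\rangle$, so the failure branch is itself an isometry: $\langle\Phi^\perp_{e_k}|\Phi^\perp_\psi\rangle = \langle e_k|\psi\rangle$ for $p<1$. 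Feeding this into $\langle\chi_{e_k}|\chi^\perp\rangle$, the cross terms vanish by the same projector orthogonality while the success–success and failure–failure terms become $\sqrt{p(1-p)}\,\langle e_k|\psi\rangle$ and $-\sqrt{p(1-p)}\,\langle\Phi^\perp_{e_k}|\Phi^\perp_\psi\rangle$, which cancel. Hence $\langle\chi_{e_k}|\chi^\perp\rangle = 0$ for all $k$, i.e.\ $\ket{\chi^\perp}\perp\mathcal V$, which is precisely what was needed.

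With both $S'$ and $R'$ realized as reflections of $\mathcal H_2$ whose axes $\ket b$ and $\ket{\chi^\perp}$ subtend angle $\theta$, their product $-R'S'$ restricts to a rotation by $2\theta$. A direct check gives $-R'S'\ket\chi = \cos 2\theta\,\ket\chi + \sin 2\theta\,\ket{\chi^\perp} = \sin 3\theta\,\ket g + \cos 3\theta\,\ket b$, and the same computation advances the angle parameter by $2\theta$ on any state of the form $\sin\phi\,\ket g + \cos\phi\,\ket b$. Since the $j=0$ term is $W\ket{0^m}\ket\psi = \sin\theta\,\ket g + \cos\theta\,\ket b$, induction on $j$ yields $(-WS'W^\dagger S')^j\,W\ket{0^m}\ket\psi = \sin\big((2j+1)\theta\big)\ket{0^m}U\ket\psi + \cos\big((2j+1)\theta\big)\ket{\Phi^\perp}$, which is the claim.
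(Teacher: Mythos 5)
Your proof is correct, and its skeleton is the same as the paper's: isolate the two-dimensional subspace spanned by $\ket{0^m}U\ket\psi$ and $\ket{\Phi^\perp}$, show that $S'$ and $WS'W^\dagger$ act on it as reflections, and conclude that $-WS'W^\dagger S'$ restricts to a rotation by $2\theta$, after which the induction on $j$ is routine. The real difference is how the one nontrivial step is handled: the claim that $W^\dagger\bigl(\sqrt{1-p}\,\ket{0^m}U\ket\psi - \sqrt{p}\,\ket{\Phi^\perp}\bigr)$ (your $W^\dagger\ket{\chi^\perp}$, the paper's $\ket{\Psi^\perp}$) is annihilated by $\ket{0^m}\bra{0^m}\otimes\Id$. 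The paper does not prove this; it asserts that the proof of the 2D Subspace Lemma (Lemma 3.6 of~\cite{Berry2013a}), stated there for $m=1$, goes through with $\ket{0}$ replaced by $\ket{0^m}$, and then carries out the resulting reflection algebra. You instead prove that step from first principles: applying \eqnref{eq:rus-unitary-on-psi} to a basis $\{\ket{e_k}\}$ of the data register, using unitarity of $W$ to deduce that the failure branch $\ket\psi\mapsto\ket{\Phi^\perp_\psi}$ is an isometry, and then cancelling the success--success term against the failure--failure term in $\langle\chi_{e_k}|\chi^\perp\rangle$. This is essentially the content of Berry et al.'s lemma, so what your route buys is self-containedness, plus an explicit identification of where obliviousness enters: your argument makes clear that \eqnref{eq:rus-unitary-on-psi} must hold for \emph{all} inputs with the same $p$ and $U$ (as it does for $W$ of the form \eqnref{eq:nondeterministic-unitary-form}), a hypothesis the corollary's wording leaves implicit and without which the conclusion fails, since nothing then forces $W^\dagger\ket{\chi^\perp}$ to lie in the failure subspace. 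Incidentally, your computation also confirms the intended conclusion despite the small typographical slips in the paper's displayed equations (an $S$ for $S'$, and $\ket{\Psi^\perp}$ where $\ket{\Phi^\perp}$ is meant).
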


\begin{proof}
The main technical part the proof of~\lemref{lem:oblivious-amp-amp} in~\cite{Berry2013a} is accomplished by another Lemma called the $2$D Subspace Lemma (see Lemma $3.6$ of~\cite{Berry2013a}).  Like~\lemref{lem:oblivious-amp-amp}, the $2$D Subspace Lemma is stated specifically for the $m=1$ case.  However, the proof still holds if $\ket 0$ is replaced by $\ket{0^m}$.  In that case, we find that the state 
\begin{equation}
\ket{\Psi^\perp} := W^\dagger \left(\sqrt{1-p}\ket{0^m}U\ket\psi - \sqrt{p}\ket{\Phi^\perp}\right)
\end{equation}
is both orthogonal to $\ket{0^m}\ket{\psi}$ and satisfies $(\ket{0^m}\bra{0^m}\otimes I)\ket{\Psi^\perp} = 0$.
This allows us to calculate the behavior of $W^\dagger$ within the two-dimensional subspace spanned by $\ket{0^m}U\ket\psi$ and $\ket{\Phi^\perp}$. We have
\begin{equation}
\begin{split}
W^\dagger (\ket{0^m}U\ket\psi) &= \sqrt{p}\ket{0^m}\ket\psi + \sqrt{1-p}\ket{\Psi^\perp} \\
W^\dagger \ket{\Phi^\perp} &= \sqrt{1-p}\ket{0^m}\ket\psi - \sqrt{p}\ket{\Psi^\perp}
\end{split}
\enspace .
\end{equation}

Just as in~\cite{Berry2013a}, this permits simple calculations yielding
\begin{equation}
-WS'W^\dagger S' (\ket{0^m}U\ket{\psi}) = \cos(2\theta)\ket{0^m}U\ket\psi + \sin(2\theta)\ket{\Phi^\perp}
\end{equation}
and
\begin{equation}
-WS'W^\dagger S \ket{\Psi^\perp} = \sin(2\theta)\ket{0^m}U\ket\psi + \cos(2\theta)\ket{\Phi^\perp}
\enspace .
\end{equation}
The conclusion is that $-WS'W^\dagger S'$ acts as a rotation by $2\theta$ in the two-dimensional subspace of interest.
\end{proof}


If $m \leq 2$, then $S'$ can be implemented with only Clifford gates, i.e., $X$ and either $Z$ or $\text{CZ}$.
Then, for a fixed value of $j$, the total number of $T$ gates in the corresponding amplified circuit is given by $(2j+1)T_0$.
In order for amplitude amplification to yield an improvement in the expected number of $T$ gates, we therefore require that
\begin{equation}
(2j+1)\sin^2(\theta) < \sin^2((2j+1)\theta)
\enspace ,
\end{equation}
a condition that holds if and only if $0 \leq p < 1/3$. Thus a sensible course of action is to apply amplitude amplification for all RUS circuits for which $p < 1/3$, and leave higher probability circuits unchanged.

Consider, for example, an RUS circuit that contains $15$ $T$ gates and has a success probability of $0.1$.  In this case, using amplitude amplification with a value of $j=1$ yields a new circuit with success probability $0.676$ and $45$ $T$ gates, an improvement in the expected number of $T$ gates by a factor of $2.25$.
The effects of amplitude amplification on our database of RUS circuits are discussed in~\secref{sec:results}.

Cost analysis of amplitude amplification for circuits with more than two ancilla qubits is more complicated because the reflection operator $S'=\overline{\text{CZ}}(m)$ is not a Clifford gate. For three ancilla qubits, for example, $S'$ requires the controlled-controlled-$Z$ gate, which can be implemented with $4$ $T$ gates~\cite{Jones2012d}.  Larger versions of $\text{CZ}(m)$ could be synthesized directly~\cite{Kliuchnikov2013a,Welch2013}, or by using a recursive procedure~\cite{Nielsen2000}.  The circuits presented in~\secref{sec:results} use at most two ancilla qubits, however, so more complicated amplification circuits are not an issue in our analysis.

\section{Direct search algorithm}
\label{sec:search}
While equations~\eqnref{eq:nondeterministic-unitary-form} and~\eqnref{eq:rus-scalar-condition} restrict the kinds of unitaries that can be exactly obtained with RUS circuits, they indicate very little about how to implement the multi-qubit unitary $W$.
Given $W$ explicitly, it is possible to synthesize a corresponding $\{\Clifford,T\}$ circuit with a minimum number of $T$ gates~\cite{Gosset2013a}, at least for $W$ with small $T$ count.
However, given a unitary $U$ of the form~\eqnref{eq:rus-unitary}, there are potentially many choices of $W$, and an efficient way to find the $W$ that will result in the minimum number of $T$ gates is unknown (and a direction for future research).

As a step towards synthesizing RUS circuits and understanding their scope, we design an optimized direct search algorithm that synthesizes RUS circuits up to a given $T$-gate count. Our direct search algorithm is as follows:
\begin{enumerate}
\itemsep 1pt
 \parskip 0pt
  \item Select the number of ancilla qubits and the number of gates.
  \item Construct a $\{\Clifford,T\}$ circuit and compute the resulting unitary matrix $W$.
  \item Partition the first two columns of $W$ into $2\times 2$ matrices.
  \item Identify and remove matrices that are proportional to Clifford gates.
  \item If the remaining matrices are all proportional to the same unitary matrix, then keep the corresponding circuit.
\end{enumerate}

We restrict the recovery operations ${R_i}$ of the circuits in our direct search to the set of single-qubit Cliffords.
This choice is motivated by our use of the $T$ count as a cost function; Clifford gates, and therefore the recovery operations are assigned a cost of zero, therefore such recovery operations are inexpensive.

In order to identify relevant search parameters for step 1 and circuit constructions for step 2, we initially performed a random search over a wide range of circuit widths (number of qubits) and sizes (number of gates).  Our search produced ample results for small numbers of ancilla qubits, large numbers of $T$ gates, and just one or two entangling gates.  We therefore focus our current study on circuits of the form shown in~\figref{fig:two-cz-canonical}, which contain one ancilla qubit and two CZ gates, interleaved with single-qubit Clifford gates.

Naively, the number of circuits of the form given in~\figref{fig:two-cz-canonical} is $O(3^n)$, where $n$ is the maximum number of (non-CZ) gates in the circuit, and the base of three is the size of the set $\{H,S,T\}$.
In order to reduce the time complexity of direct search, we constructed each single-qubit gate sequence using the canonical form proposed in~\cite{Bocharov2012}. A canonical form sequence is the product of three $2\times2$ unitary matrices $g_2Cg_1$ where $g_1,g_2$ belong to the single-qubit Clifford group, and $C$ is the product of some number of ``syllables'' $TH$ and $SHTH$.  The canonical form yields a unique representation of all single-qubit circuits over $\{H,T\}$; there are $2^{t-3}+4$ canonical circuits of $T$-count at most $t$.  The canonical representation yields more than a quadratic improvement in time complexity compared to naive search, since the number of $T$ gates is roughly one-half the total number of gates.

In general, the canonical form requires conjugation by the full single-qubit Clifford group, which contains $24$ elements.  Given a product of syllables $C$, each of the $24^2=576$ circuits $g_2Cg_1$ are unique.  However, when multiple canonical form circuits are composed in a larger circuit, as in~\figref{fig:two-cz-canonical}, some combinations of Clifford gates can be eliminated. For example, when $g_2Cg_1$ is applied to the state $\ket{0}$, $g_1$ need only be an element of $\{I,X,SH,SHX,HSH,HSHX\}$ since diagonal gates act trivially on $\ket{0}$.  Similar simplifications for~\figref{fig:two-cz-canonical} are shown in~\figref{fig:canonical-simplifications}.  In total, these Clifford optimizations further reduce the search space by a factor of more than $10^5$.

\begin{figure}
\centering
\includegraphics{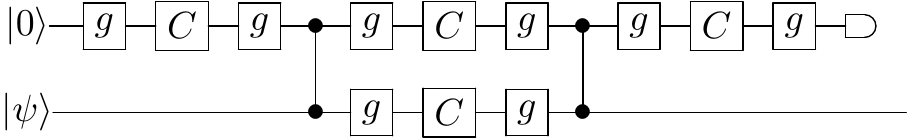}
\caption{\label{fig:two-cz-canonical}
The general form of most RUS circuits in our database.  Each of the gates labeled $g$ represents an element of the single-qubit Clifford group.  Each of the gates labeled $C$ represents a single-qubit canonical circuit as defined in~\cite{Bocharov2012}.
}
\vspace{.8cm}
\begin{subfigure}[b]{.49\textwidth}
\centering
\includegraphics[scale=1]{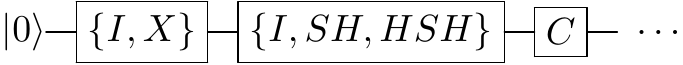}
\caption{}
\end{subfigure}
\begin{subfigure}[b]{.49\textwidth}
\centering
\includegraphics[scale=1]{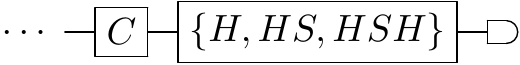}
\caption{}
\end{subfigure}
\begin{subfigure}[b]{\textwidth}
\hfill
\end{subfigure}
\begin{subfigure}[b]{.49\textwidth}
\centering
\includegraphics[scale=1]{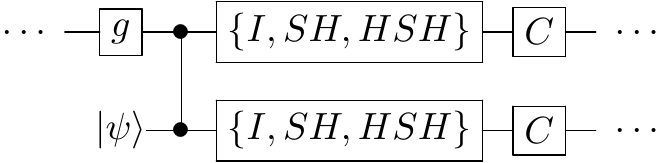}
\caption{}
\end{subfigure}
\begin{subfigure}[b]{.49\textwidth}
\centering
\includegraphics[scale=1]{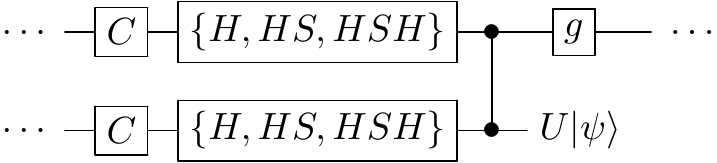}
\caption{}
\end{subfigure}
\caption{\label{fig:canonical-simplifications}
Some gates $g$ in~\figref{fig:two-cz-canonical} can be restricted to a subset of the single-qubit Clifford group. (a) Circuits that begin with diagonal gates can be eliminated since they add a trivial phase to $\ket 0$. (b) Similarly, diagonal gates have no impact on the $Z$-basis measurement. (c) Pauli gates and $S$ gates can be commuted through the CZ and absorbed into either $\ket\psi$ or the preceding $g$ gate. (d) Analogously, Pauli and $S$ gates occurring before the CZ can be absorbed by the trailing $g$ gate or by the output.
}
\end{figure}

Despite these optimizations, our direct search algorithm still requires time exponential in the number of $T$ gates.  To further reduce the time complexity, we partitioned the search into thousands of small computations running in parallel on a large cluster and collected the results in a central database.  We were able to exhaustively synthesize circuits of the form given in~\figref{fig:two-cz-canonical} up to a total (raw) $T$ count of $15$ in roughly one week running on hundreds of cores.
The results of our direct search algorithm are presented in the next section.

\section{Direct search results
\label{sec:results}
}

Our search yielded many RUS circuits that implement the same unitary $U$, but with different $T$-gate counts and success probabilities. To eliminate redundancy we construct a database containing only the circuit with the minimum expected $T$ count for a given unitary $U$.
The resulting database contains $2194$ RUS circuits each of which contains at most $15$ $T$ gates.  Upon success, each circuit exactly implements a unique non-Clifford single-qubit unitary $U$, and otherwise implements a single-qubit Clifford operation.
The database statistics are shown in~\figref{fig:search-results}. For circuits with success probability less than $1/3$, we used amplitude amplification to improve performance (see~\secref{sec:rus.amplify}).
Most RUS circuits result in high success probability and low expected $T$ count.
\figref{fig:histogram-expected-tcount} illustrates the impact of amplitude amplification on the expected $T$ count.  Amplification improved the performance of circuits with relatively high expected $T$ count, but did not improve circuits with expected $T$ count of $30$ or less.
In general, RUS circuits exhibit very low expected $T$ counts around $15$--$20$.
Note that the database also includes some circuits that were found by preliminary searches not of the form of~\figref{fig:two-cz-canonical}.

\begin{figure}[tb]
\centering
\begin{subfigure}[b]{.49\textwidth}
\centering
\includegraphics[height=7.71cm]{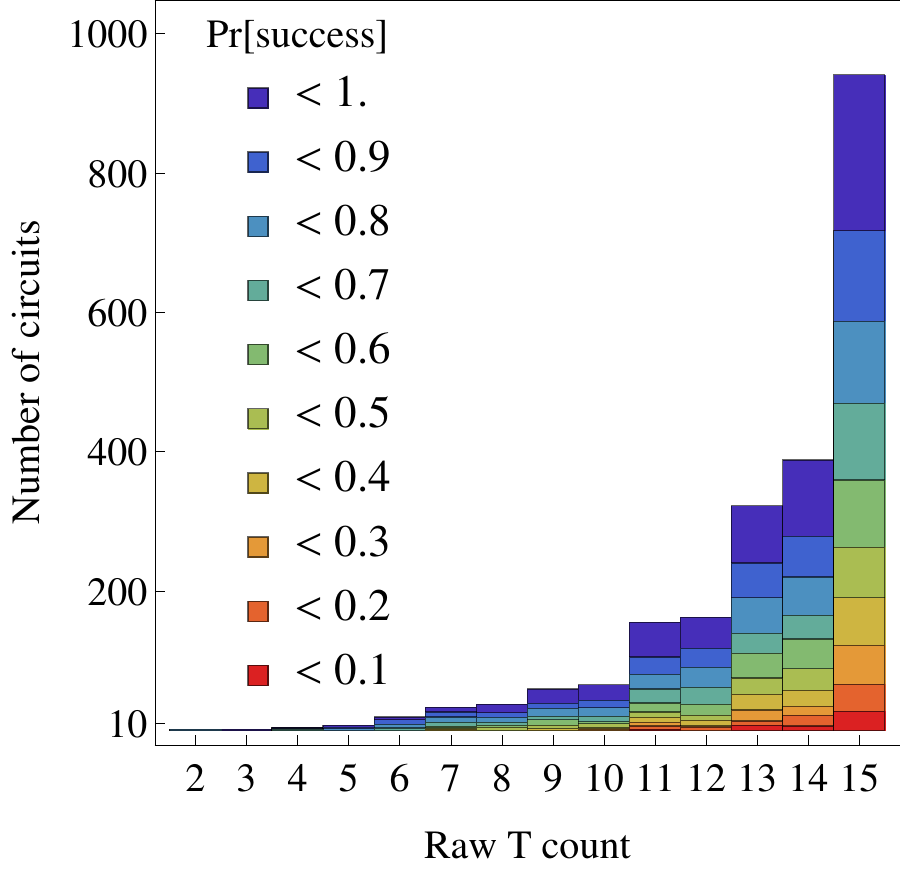}
\caption{\label{fig:histogram-raw-tcount}
}
\end{subfigure}
\hfill
\begin{subfigure}[b]{.48\textwidth}
\centering
\includegraphics[height=7.71cm]{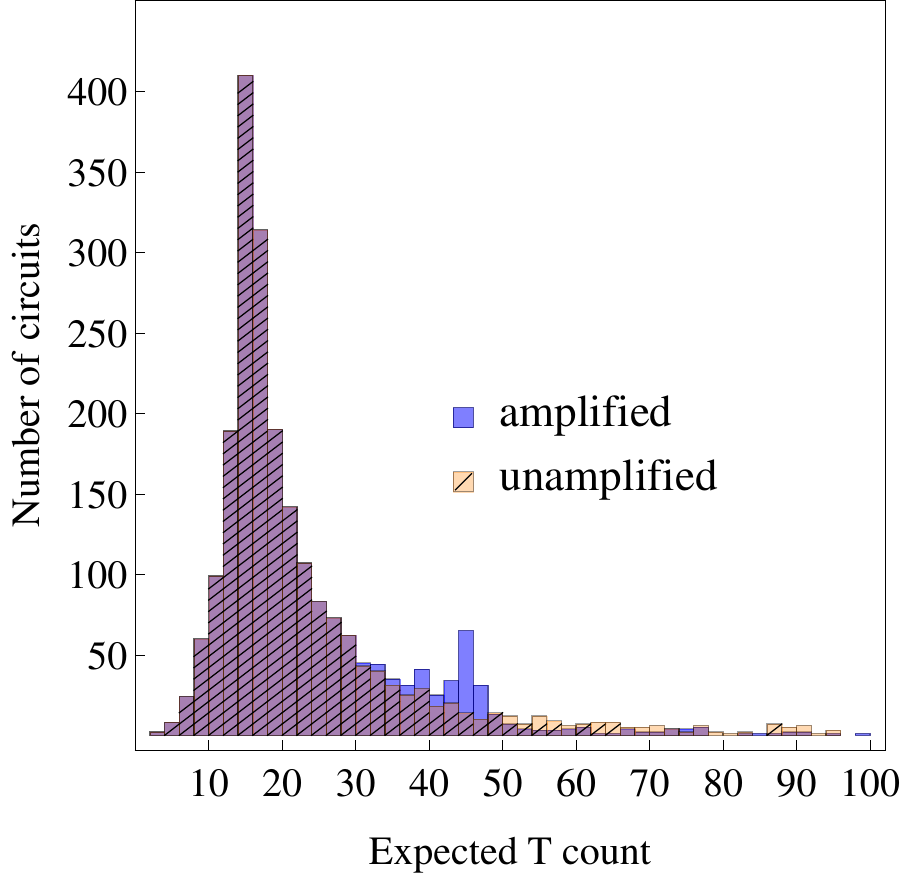}
\caption{\label{fig:histogram-expected-tcount}
}
\end{subfigure}
\caption{\label{fig:search-results}
Statistics for the database of repeat-until-success circuits, including all circuits of the form of~\figref{fig:two-cz-canonical} up to a $T$ count of $15$.  (a) The total number of circuits grouped by (raw) $T$ gate count and success probability. (b) The total number of circuits grouped by expected $T$ count, both before amplitude amplification and after amplitude amplification.  The two histograms (before amplification and after amplification) are overlayed, where the darker hatched bars indicate circuits that are unaffected by amplification.  Only circuits with an expected $T$ count of at most $100$ are shown.
}
\end{figure}

Of the $2194$ RUS circuits, $1659$ are axial rotations, i.e., unitaries which, modulo conjugation by Cliffords, are rotations about the $Z$-axis of the Bloch sphere, and $535$ are non-axial rotations.
The number of axial rotations is noteworthy since, modulo Clifford conjugation, only one non-trivial single-qubit rotation can be exactly synthesized with $\{\Clifford,T\}$ and without measurement, namely $T$~\cite{Kliuchnikov2012}.  Our results show that \emph{many} axial rotations can be implemented exactly (conditioned on success) when measurement is allowed.

Remarkably, the non-axial rotations in our database offer an expected $T$ count that is dramatically better than the $T$ count obtained by approximation algorithms~\cite{Selinger2012a,Kliuchnikov2012b,Ross2014}. For each RUS circuit in the database we computed the number of $T$ gates required to approximate the corresponding unitary to within a distance of $10^{-6}$ using the algorithm of KMM.  \figref{fig:kmm-ratios} shows the ratio of the $T$ count given by KMM vs.~the expected $T$ count for the RUS circuit. (KMM and Ross-Selinger achieve similar $T$ count scaling so we expect similar ratios when comparing to Ross-Selinger.) Our results show a typical improvement of about a factor of three for axial rotations and a typical improvement of about a factor of about $12$ for non-axial rotations.
The larger improvement for non-axial rotations is expected since the KMM algorithm requires the unitary to be first decomposed into a sequence of three axial rotations.

\begin{figure}[tb]
\centering
\includegraphics[scale=1]{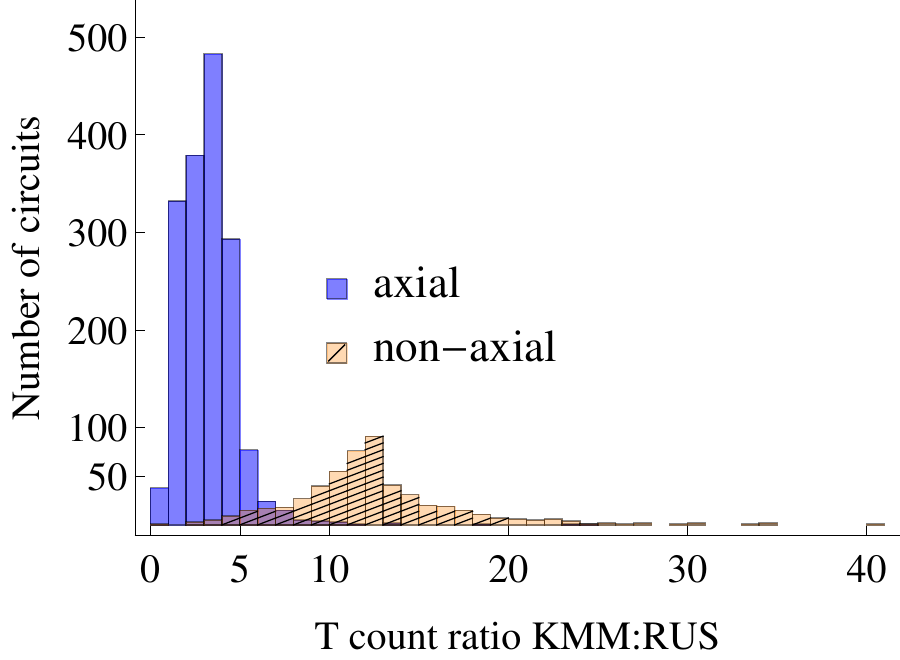}
\caption{\label{fig:kmm-ratios}
RUS circuits database split into axial and non-axial single-qubit rotations.  For each circuit, the number of $T$ gates required to approximate the corresponding ``success'' unitary $U$ to within $10^{-6}$ was calculated using the algorithm of~\cite{Kliuchnikov2012b}.  The $x$-axis represents the ratio of the KMM $T$ count vs.~the expected number of $T$ gates for the RUS circuit.
}
\end{figure}

As an example, the RUS circuit shown in~\figref{fig:large-kmm-ratio-circuit} implements the non-axial single-qubit rotation $U = (2X + \sqrt{2}Y + Z)/\sqrt{7}$ with four $T$ gates and a probability of success of $7/8$.  By contrast, approximating $U$ to within $\epsilon = 10^{-6}$ using the KMM algorithm requires a total of $182$ $T$ gates.  Thus the circuit in~\figref{fig:large-kmm-ratio-circuit} not only implements the intended unitary exactly, but does so at a cost over $40$ times less than the best approximation methods.

\begin{figure}
\centering
\includegraphics{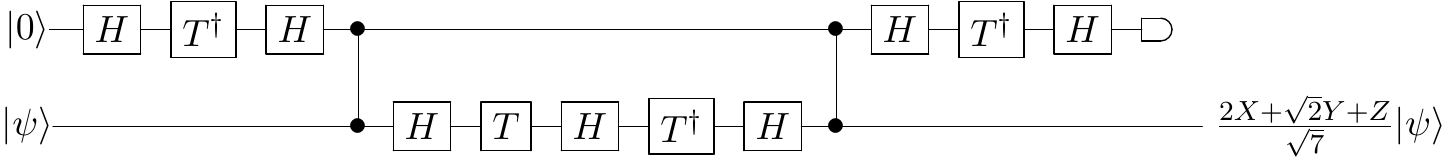}
\caption{\label{fig:large-kmm-ratio-circuit}
An RUS circuit to implement the unitary $U = (2X + \sqrt{2}Y + Z)/\sqrt{7}$ with probability $7/8$, and $Z$ otherwise.  Approximation of $U$ without ancillas requires $182$ $T$ gates (roughly $40$ times more) for $\epsilon = 10^{-6}$.
}
\end{figure}

Our database is too large to offer an analysis of each circuit in detail.
However, we highlight some particularly important examples.
The smallest circuit in our database contains two $T$ gates and is shown in~\figref{fig:gosset-unitary}. Upon measuring zero, which occurs with probability $3/4$, the circuit implements $(\Id + i\sqrt{2}X)/\sqrt{3}$ and upon measuring one implements $\Id$.
This circuit was predicted to exist by Gosset and Nagaj~\cite{Gosset2013}. They required a $\{\Clifford,T\}$ circuit that exactly implemented $R=(\sqrt{2}\Id-iY)/\sqrt{3}$ with a constant probability of success.  The unitary implemented by~\figref{fig:gosset-unitary} is equivalent to $R$ up to conjugation by Clifford gates.

\begin{figure}
\centering
\includegraphics{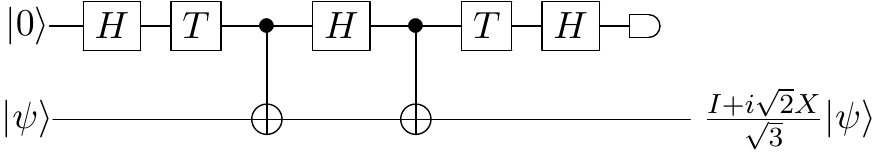}
\caption{\label{fig:gosset-unitary}
The smallest circuit in our database.  Upon measuring zero, with probability $3/4$, it implements $(\Id + i\sqrt{2}X)/\sqrt{3}$ on the input state $\ket\psi$.  Upon measuring one, it implements the identity.
}
\end{figure}

As discussed in~\secref{sec:introduction}, our database contains a circuit that implements $V_3$.  In addition to the circuit shown in~\figref{fig:v3-5.26}, our search also found a circuit that implements $V_3$ with the same number of $T$ gates, but with just a single ancilla qubit, as shown in~\figref{fig:v3-one-ancilla}.  The expected $T$ count of the single-ancilla circuit is slightly worse than that of~\figref{fig:v3-5.26}, though, since all four of the $T$ gates on the ancilla must be performed ``online''.

\begin{figure}
\centering
\includegraphics{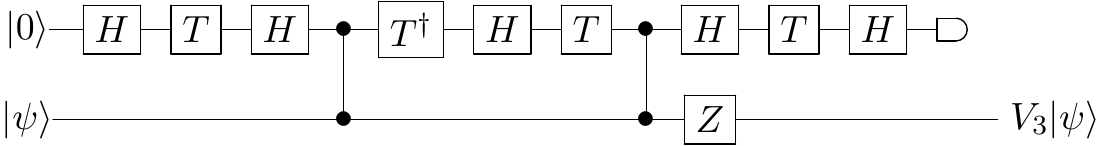}
\caption{A circuit, like the circuits in \figref{fig:v3-circuits}, to implement $V_3$ with probability $5/8$ and identity with probability $3/8$, using only one ancilla qubit and one measurement.
\label{fig:v3-one-ancilla}
}
\end{figure}

The $V_3$ gate is one of a family of $V$-basis gates for which the normalization factor is $1/\sqrt{5}$.
In addition to single-qubit unitary decomposition based on $V_3$, \cite{Bocharov2013} also offers the possibility of decomposing single-qubit unitaries using $V$-basis gates with normalization factors $1/\sqrt{p}$ where $p$ is a prime.  These ``higher-order'' $V$ gates cover $SU(2)$ more rapidly than $V_3$ and therefore offer potentially more efficient decomposition algorithms.  A number of such $V$-basis gates can be found in our database, including axial versions for $p\in \{13,17,29\}$, as shown in~\figref{fig:high-order-v-circuits}, offering the first fault-tolerant implementations of these gates.  The prospect of decomposition algorithms with these circuits is discussed in~\secref{sec:applications.v3}.

\begin{figure}
\begin{subfigure}[b]{\textwidth}
\includegraphics[width=\textwidth]{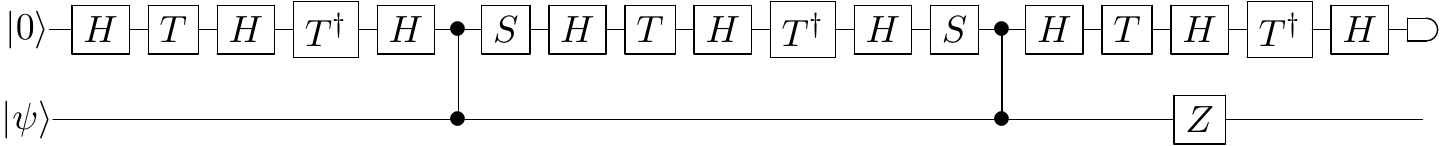}
\caption{$(3I+2iZ)/\sqrt{13}$, $\Pr=13/16$}
\vspace{.2cm}
\end{subfigure}
\begin{subfigure}[b]{\textwidth}
\includegraphics[width=\textwidth]{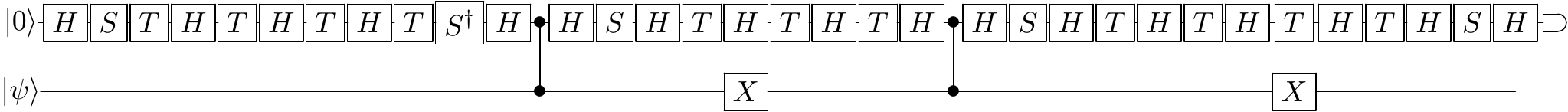}
\caption{$(4I+iZ)/\sqrt{17}$, $\Pr \approx 0.985$}
\vspace{.2cm}
\end{subfigure}
\begin{subfigure}[b]{\textwidth}
\includegraphics[width=\textwidth]{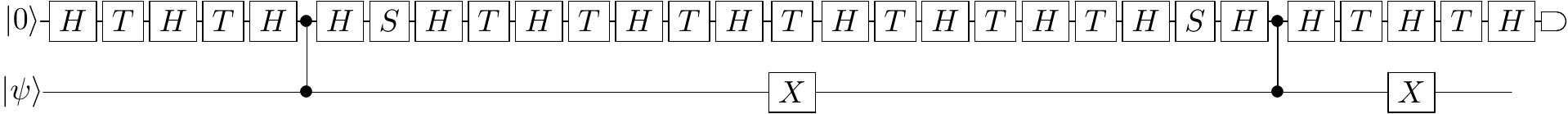}
\caption{$(5I+2iZ)/\sqrt{29}$, $\Pr \approx 0.774$}
\end{subfigure}
\caption{\label{fig:high-order-v-circuits}
RUS circuits for $V$-basis gates with prime normalization factors (a) $p = 13$ (b) $p=17$ and (c) $p = 29$.  The values under each circuit indicate the unitary effected upon success and the success probability, respectively.  Each circuit implements the identity upon failure.
}
\end{figure}

\section{Applications
\label{sec:applications}
}

One application of RUS circuits is in the construction of universal sets of gates.
Our RUS circuits offer exact, fault-tolerant implementations of a large set of single-qubit unitary gates.
The Clifford group plus any one non-Clifford gate is universal for quantum computation~(see, e.g., \cite{Campbell2012} Appendix D). Thus any of our RUS circuits can be used to construct a new universal gate set.  The question, though, is whether or not RUS circuits can be used to decrease resource costs of unitary approximation methods.

In this section, we show that RUS circuits can be used to significantly improve upon approximate decomposition of single-qubit unitaries.  First we discuss the use of our improved $V_3$ circuit for decomposition into $\{\Clifford,V_3\}$.  Then we show how to compose RUS circuits in series in order to expand the size and density of the database.  The expanded database can be used to approximate single-qubit unitaries up to an accuracy that is sufficient for a number of important quantum algorithms.  In particular, in~\secref{sec:applications.phase-estimation}, we show how to use circuits in our database for applications using the quantum phase estimation algorithm.

\subsection{Decomposition with \texorpdfstring{$V_3$}{V3}
\label{sec:applications.v3}
}
The RUS circuit for $V_3$, shown in~\figref{fig:v3-5.26}, can be used directly in the decomposition algorithm of~\cite{Bocharov2013}.  The BGS algorithm produces an $\epsilon$-approximation of a given single-qubit unitary with $3\log_5(1/\epsilon)$ $V_3$ gates in most cases.  Multiplying by an expected $T$-cost of $5.26$, using the circuit in~\figref{fig:v3-5.26}, yields an algorithm with an expected $T$ count of
\begin{equation}
\label{eq:bgs-tcount}
15.78\log_5(1/\epsilon)
\enspace .
\end{equation}
This is an improvement over the estimated $T$ count of $3(3.21\log_2(3/\epsilon)-6.93)$~\cite{Kliuchnikov2012b} for all $\epsilon < 0.25$.




The database also contains $V$-basis gates with prime normalization factors larger than $5$. In~\cite{Bocharov2013}, the authors conjecture that the decomposition algorithm for $p=5$ extends to other primes with a $T$-count scaling of
$4\log_p(1/\epsilon)$.
However, whereas $p=5$ requires only the single $V_3$ gate, higher prime values require implementation of multiple $V$ gates. For simplicity, assume that each of the required $V$ gates can be implemented with $T$-count $T_p$.  Then the decomposition achieved for prime $p$ will be better than that obtained with $V_3$ if
\begin{equation}
\label{eq:high-order-V-ratio}
1 < \frac{5.26}{T_p} \log_5(p)
\enspace .
\end{equation}

Unfortunately, our database contains only a single $V$-basis gate for each of $p = \{13,17,29\}$.  For the sake of argument, we calculate~\eqnref{eq:high-order-V-ratio} under the optimistic assumption that for each $p$, the remaining $V$ gates can someday be implemented at the same cost $T_p$.  Using the circuits in~\figref{fig:high-order-v-circuits} we obtain
\begin{subequations}
\begin{align}
&5.26/7.38 \log_5{13} \approx 1.13,\\
&5.26/11.17 \log_5{17} \approx 0.83,\\
&5.26/14.22 \log_5{17} \approx 0.77
\enspace .
\end{align}
\end{subequations}
Based on these calculations we conclude that, while improved decomposition may be possible using $p=13$, higher values of $p$ are unlikely to yield cost benefits on their own.

On the other hand, given implementations of multiple $V$ gates, there is no reason to limit to a single value of $p$.  One could imagine an algorithm that combined multiple classes of $V$ gates, using largely $V_3$ and using more expensive high-order $V$ gates selectively.  We do not consider such an algorithm directly.  In the next section, however, we study the effect of optimally combining all of the RUS circuits in our database, not just $V$ gates.

\subsection{Decomposition by composition of RUS circuits
\label{sec:applications.database}
}

It is possible to approximate a given single-qubit unitary $U$ to within any $\epsilon$ by composing Clifford gates and circuits from our database.  But finding the optimal composition sequence among all possible compositions of circuits is a challenging task.  Ideally, we could construct an efficient decomposition algorithm based on algebraic characterization of the set of RUS circuits, similar to algorithms for other gate sets~\cite{Selinger2012a,Kliuchnikov2012b,Bocharov2013,Ross2014}.  But the current theoretical characterization of RUS circuits remains open is a direction for future work.
Here, we develop decomposition algorithm based on exhaustive composition of RUS circuits, which is similar in nature to the methods of~\cite{Fowl04c} and~\cite{Bocharov2012}.

Starting with the set of RUS circuits found by our direct search algorithm, we compute all products of pairs of circuits, keeping those that produce a unitary which is not yet in the database.  Composite circuits of arbitrary size can be constructed in this manner: triples of circuits can be constructed from singles and pairs, and so on.  Call a circuit a class-$k$ circuit if it is composed of a $k$-tuple of RUS circuits from the original database.  Then the number $N_k$ of class-$k$ circuits is bounded by
\begin{equation}
N_k \leq N_1 \cdot N_{k-1} \leq N_1^k
\enspace ,
\end{equation}
where $N_1$ is the number of circuits in the original database.

To manage the database expansion, we keep only those circuits that yield an expected $T$ count of at most some fixed value $T_0$.  This has the simultaneous effect of discarding poorly performing circuits and reducing the value of $N_k$ so that construction of class-$(k+1)$ circuits is less computationally expensive.  Furthermore, circuits can be partitioned into equivalence classes by Clifford conjugation.  The unitaries of the initial set of circuits are of the form $g_0 U g_1$, where $U$ is the unitary obtained from the RUS circuit, and $g_0, g_1$ are single-qubit Cliffords.  Thus, the product of $k$ such circuits has the form
\begin{equation}
g_0 U_1 g_1 U_2 g_2\ldots U_k g_{k}
\enspace .
\end{equation}
The set of class-(${k+1}$) circuits can then be constructed by using
\begin{equation}
g_0 U_1 g_1 U_2 g_2\ldots U_k g_{k} (g_{k'} U_{k+1} g_{k+1}) = g_0 U_1 g_1 U_2 g_2\ldots U_k g_{k''} U_{k+1} g_{k+1}
\enspace,
\end{equation}
so that the Clifford $g_k$ is unnecessary.  Furthermore, $g_0$ can always be prepended later, and so we instead express each class-$k$ unitary as
\begin{equation}
\label{eq:clifford-equiv-representative}
U_1 g_1 U_2 g_2\ldots U_k
\enspace .
\end{equation}

To find an equivalence class representative of $U$, we first remove the global phase by multiplying by $u^*/\sqrt{|u|^2}$, where $u$ is the first non-zero entry in the first row of $U$. Next, we conjugate $U$ by all possible pairs of single-qubit Cliffords. The first element of a lexicographical sort then yields the representative $g_1 U g_2$ for some Cliffords $g_1, g_2$.

Once the expanded database has been constructed up to a desired size, the decomposition algorithm is straightforward.  Given a single-qubit unitary $U$ and $\epsilon \in [0,1]$, select all database entries $V$ such that $D(U,V) \leq \epsilon$, where
\begin{equation}
\label{eq:fowler-distance}
D(U,V) = \sqrt{\frac{2-\abs{\Tr(U^\dagger V)}}{2}}
\end{equation}
is the distance metric defined by~\cite{Fowl04c} and also used by~\cite{Selinger2012a, Kliuchnikov2012b,Bocharov2013,Wiebe2013,Ross2014}.
Then, among the selected entries, find and output the circuit with the lowest expected $T$ count.

\subsubsection{Results: decomposition with axial rotations
\label{sec:applications.database.axial}
}

An arbitrary single-qubit unitary can be decomposed into a sequence of three $Z$-axis rotations and two Hadamard gates~\cite{Nielsen2000}.  Therefore, approximate decomposition of $Z$-axis rotations suffices to approximate any single-qubit unitary.
If we limit to $Z$-axis, i.e, diagonal, rotations only, then a few additional simplifications are possible.  In particular, each unitary can be represented by a single real number corresponding to the rotation angle in radians.  The result of a sequence of such rotations is then given by the sum of the angles.  Furthermore, up to conjugation by $\{X,S\}$, all $Z$-axis rotations can be represented by an angle in the range $[0, \pi/4]$.  This allows for construction of a database of $Z$-axis rotations which is much larger than a database of arbitrary (non-axial) unitaries.

Using the database expansion procedure described above, we construct a database containing all combinations of RUS circuits with expected $T$ count at most $30$. The maximum distance (according to~\eqnref{eq:fowler-distance}) between any two neighboring rotations is less than $2.8\times 10^{-6}$, and can be improved to $2\times 10^{-6}$ by selectively filling the largest gaps. So the resulting database permits approximation of any $Z$-axis rotation to within $\epsilon = 10^{-6}$.

To approximate a $Z$-axis rotation by an angle $\theta$, we select all entries that are within the prescribed distance $\epsilon$, and then choose the one with the smallest expected $T$ count.  This procedure is efficient since the database can be sorted according to rotation angle.  Then the subset of entries that are within $\epsilon$ can be identified by binary search.

In order to assess the performance of this method, we approximate, for various values of $\epsilon$, a sample of $10^5$ randomly generated angles in the range $[0,\pi/4]$.
Results are shown in~\figref{fig:rus-axial-decomposition} and~\tabref{tab:rus-axial-decomposition}.  A fit of the mean expected $T$ count for each $\epsilon$ yields a scaling given by~\eqnref{eq:axial-tcount-scaling}, with a slope roughly $2.4$ times smaller than that reported by~\cite{Kliuchnikov2012b} for the rotation $R_Z(1/10)$.

\begin{table}[t]
    \begin{minipage}[b]{.57\textwidth}
        \includegraphics[width=.99\textwidth]{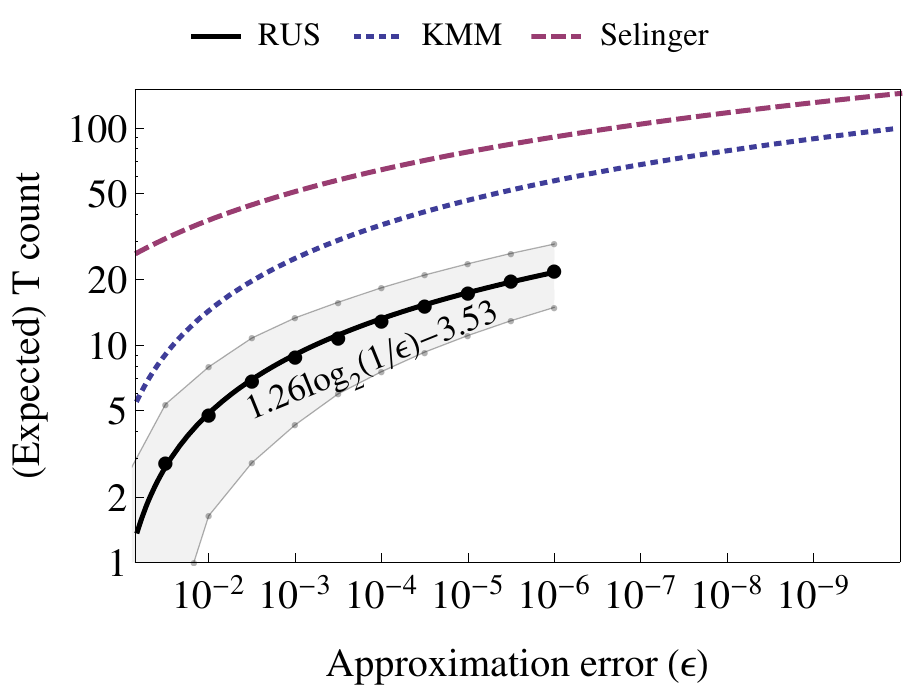}
        \captionof{figure}{The expected number of $T$ gates required to approximate a single-qubit $Z$-axis rotation to within a distance $\epsilon$ over $10^5$ real numbers selected in the range $[0,\pi/4]$ uniformly at random.  For each value $\theta$, the RUS circuit with the smallest expected $T$ count within $\epsilon$ of the unitary $R_Z(\theta)$ was selected.  The mean for each value of $\epsilon$ is plotted, yielding a fit-curve of $1.26 \log_2(1/\epsilon) - 3.53$.  The gray region is an estimate of the interval containing the actual number of $T$ gates with probability $95\%$. The other curves are included for reference: KMM $=3.21\log_2(1/\epsilon)-6.93$~\cite{Kliuchnikov2012b}, Selinger $=4\log_2(1/\epsilon) + 11$~\cite{Selinger2012a}.}
        \label{fig:rus-axial-decomposition}
    \end{minipage}
    \hfill
    \begin{minipage}[b]{.4\textwidth}
        \centering
        \begin{tabular}{@{}l@{~~}l@{~~}l@{}}
			$\log_{10}(1/\epsilon)$ & Exp $T$ ($\sigma^2$)& $\pm 95\%$ ($\sigma^2$)\\
			\hline
			$1$  &$1.1$ ($1.1$) &$1.2$ ($3.6$)\\
			$1.5$&$2.9$ ($2.2$) &$2.5$ ($2.9$)\\
			$2$  &$4.8$ ($3.4$) &$3.1$ ($2.9$)\\
			$2.5$&$6.8$ ($3.9$) &$4.0$ ($3.8$)\\
			$3$  &$8.8$ ($4.3$) &$4.5$ ($4.7$)\\
			$3.5$&$10.9$ ($4.6$)&$4.9$ ($5.2$)\\
			$4$  &$12.9$ ($4.8$)&$5.4$ ($5.5$)\\
			$4.5$&$15.1$ ($5.3$)&$5.9$ ($5.7$)\\
			$5$  &$17.4$ ($5.7$)&$6.3$ ($5.8$)\\
			$5.5$&$19.6$ ($6.0$)&$6.7$ ($6.1$)\\
			$6$  &$22.0$ ($6.4$)&$7.1$ ($6.5$)
		\end{tabular}
		\vspace{.5cm}
        \caption{Expected $T$ count required to approximate a random single-qubit $Z$-axis rotation with an RUS circuit.  The middle column indicates the expected $T$ count based on a sample of $10^5$ random angles.  The right-hand column indicates the expected $95$ percent confidence interval of the $T$ count for the best RUS circuit, given a random angle $\theta$.  The variance of each expected value is indicated in parenthesis.}
        \label{tab:rus-axial-decomposition}
        \vspace{.5cm}
    \end{minipage}
\end{table}

By way of comparison, Wiebe and Kliuchnikov report a scaling of $1.14\log_2(1/\theta)$ for small angles $\theta$.  However, their RUS circuits are specially designed for small angles.  For arbitrary angles they report an expected $T$ count of about
\begin{equation}
\label{eq:wiebe-kliuchnikov-tcount}
1.14\log_2(10^\gamma) + 8\log_2(10^{-\gamma}/\epsilon)
\enspace,
\end{equation}
where $\theta = a\times 10^{-\gamma}$ for some $a \in (0,1)$ and integer $\gamma > 0$.
Using~\eqnref{eq:wiebe-kliuchnikov-tcount} to calculate costs for the same $10^5$ random angles as above, we obtain a fit function of
\begin{equation}
\label{eq:wiebe-kliuchnikov-coarse-fit}
6\log_2(1/\epsilon) - 2.2
\enspace .
\end{equation}
Equation \eqnref{eq:wiebe-kliuchnikov-coarse-fit} indicates that the efficiency of the circuits in~\cite{Wiebe2013} does not extend to coarse angles.  Nevertheless, in~\secref{sec:applications.database.hybrid} we show how to combine the circuits of Wiebe and Kliuchnikov with our RUS circuits to achieve good cost scaling for relatively high accuracies.

Equation~\eqnref{eq:axial-tcount-scaling} also implies that RUS $Z$-axis rotations can be used to approximate \emph{arbitrary} single-qubit unitaries with a scaling approaching that of optimal ancilla-free decomposition.  Since an arbitrary unitary can be expressed as a product of three axial rotations, the expected $T$ count for approximating an arbitrary single-qubit unitary is given by
$3.9\log_2(3/\epsilon) - 8.37$.
On the other hand, Fowler calculates an optimal $T$-count of $2.95\log_2(1/\epsilon) + 3.75$ (on average) without using ancillas~\cite{Fowl04c}.

Since our circuits are non-deterministic, we are also concerned with the probability distribution of the number of $T$ gates.  For each composite circuit in the database, we calculate the variance $\sigma^2$ of the $T$ count based on the variance of each individual circuit.  We then obtain a confidence interval using Chevyshev's inequality
\begin{equation}
\Pr(\abs{\text{Actual}[T] - \text{Exp}[T]} \geq k\sigma) \leq \frac{1}{k^2}
\enspace .
\end{equation}
~\tabref{tab:rus-axial-decomposition} shows the mean expected $T$ count for each $\epsilon$.  By also calculating the mean variance $\sigma^2$, we obtain an estimate of the corresponding $95\%$ confidence interval, shown by the gray region in~\figref{fig:rus-axial-decomposition}.  That is, for a randomly chosen angle $\theta$, the actual number of $T$ gates required to implement $R_Z(\theta)$ is within the given interval around $1.26\log_2(1/\epsilon)-3.53$, with probability $0.95$.

The approximation accuracy permitted by our database is limited by computation time and memory.  To maximize efficiency, we used floating-point (accurate to $14$ digits) rather than symbolic arithmetic. Construction of all RUS circuit combinations up to expected $T$ count of $30$ took roughly $20$ hours and $41$ GB of memory using Mathematica.  \tabref{tab:z-rotation-density} shows the number of circuit combinations and corresponding rotation angle densities for increasing values of the expected $T$ count.  The size and density of the database increases by roughly one order of magnitude for every five $T$ gates.  We expect that with a more efficient implementation---in C/C++ for example---the worst-case approximation accuracy could be improved.

\begin{table}
\centering
\begin{tabular}{llll}
\hline\hline
Max. exp. &      &               &\\
$T$ count & Size & Mean $D$ & Max $D$ \\
\hline
$5$  & $7$ & $0.04$ & $0.08$\\
$10$ & $134$ & $0.0021$ & $0.0066$\\
$15$ & $2079$ & $0.00013$ & $0.0014$\\
$20$ & $27420$ & $0.00001$ & $0.00017$\\
$25$ & $320736$ & $0.0000009$ & $0.000016$\\
$30$ & $3446708$ & $0.00000008$ & $0.0000028$\\
\hline\hline
\end{tabular}
\caption{\label{tab:z-rotation-density}
Size and density of the $Z$-axis rotation database according to the maximum expected number of $T$ gates.  The mean and the maximum distances between nearest neighbors is given in columns three and four, respectively.
}
\end{table}

\subsubsection{More accurate axial rotations using gearbox circuits
\label{sec:applications.database.hybrid}
}
The approximation accuracy of $Z$-axis rotations can be improved indirectly by combining our database of axial rotations with the floating-point approach of Wiebe and Kliuchnikov~\cite{Wiebe2013}.
In their approach, a $Z$-axis rotation by angle $\phi = a\times 10^{-\gamma}$ is approximated with a ``gearbox'' circuit that multiplies the mantissa $a \in (0,1)$ by the value $10^{-\gamma}$.  The $T$ count of the gearbox circuit scales as
\begin{equation}
\label{eq:gearbox-scaling}
\text{Exp}_Z^{\text{WK}}[T] = 2T(a,10^\gamma \epsilon) + 1.14\log_2(10^\gamma) + 12.2
\enspace ,
\end{equation}
where $T(a,\epsilon)$ is the number of $T$ gates required to approximate $R_Z(a)$ to within a distance $\epsilon$.  In~\cite{Wiebe2013}, Selinger's algorithm is used to approximate the mantissa $a$.  However, any approximation method may be used.

The gearbox circuits are most useful when the angle $\phi$ is very small, and the number of significant digits $m = \log_{10}(10^{-\gamma}/\epsilon)$ is also small.  In that case, \eqnref{eq:gearbox-scaling} is largely determined by the $1.14\log_2(10^\gamma)$ term, which scales better than any other known methods. The scaling is maintained even for very high accuracy, so long as the required relative precision is low.

If our decomposition method based on RUS circuits is used to approximate $R_Z(a)$ (instead of Selinger's method), then we obtain
\begin{equation}
\text{Exp}_Z^{\text{WK}}[T] = 2.52\log_2(10^{-\gamma}/\epsilon) + 1.14\log_2(10^\gamma) + 5.14
\enspace ,
\end{equation}
which is an improvement over the direct methods due to Selinger and KMM, even for large angles.
The density of the database presented in~\secref{sec:applications.database.axial} permits a maximum of $m = 6$ significant digits; a larger database would permit higher precision.

If full precision is required (i.e., $\gamma = 0$), then a slightly different method can be used.
Given an angle $\theta$ and error $10^{-6} > \epsilon \geq 10^{-11}$, an approximation of $R_Z(\theta)$ can be obtained by first using the RUS axial rotation database to get $R_Z(\tilde{\theta})$ such that $|\tilde{\theta} - \theta| = \phi \leq 10^{-6}$.  Then, a gearbox circuit can be used to approximate $\phi = a\times 10^{-\gamma}$ to within the prescribed distance $\epsilon$, where $R_Z(a)$ is obtained by again using the RUS database. The expected $T$ count is estimated by
\begin{equation}
\text{Exp}_Z^{\text{hybrid}}[T] = 1.26\log_2(1/\delta) + 2\cdot1.26\log_2(10^{-\gamma}/\epsilon) + 1.14\log_2(10^\gamma) + 1.61
\enspace ,
\end{equation}
where $\delta$ is the selected accuracy of the approximation $\tilde{\theta}$.  Assuming $\phi \approx \delta$ and therefore $10^\gamma \approx 10/\delta$, we obtain
\begin{equation}
\label{eq:estimated-hybrid-scaling}
\text{Exp}_Z^{\text{hybrid}}[T] \approx 2.52\log_2(1/\epsilon) - 0.12\log_2(1/\delta) - 2.97
\enspace .
\end{equation}
Thus, an effective strategy is to approximate $\theta$ to the maximum accuracy permitted by the axial RUS database ($\delta = 10^{-6}$) and then approximate the remaining angle $\phi$ with a gearbox circuit.

The coarse approximation $\tilde{\theta}$ will often be better than $10^{-6}$ so the actual scaling may vary from~\eqnref{eq:estimated-hybrid-scaling}.  To check, we calculated for $\epsilon \geq 10^{-11}$, the cost of the hybrid approach for the same $100$k angles used in~\secref{sec:applications.database.axial}.  The results yield an empirical fit of $2.62\log_2(1/\epsilon) - 3.1$, which is slightly higher than~\eqnref{eq:estimated-hybrid-scaling}, but still lower than that reported by KMM.

Even higher accuracy can be obtained by recursively applying the hybrid procedure.  If the mantissa $a$ of $\phi$ requires more accuracy than the RUS database can provide, then $R_Z(a)$ can be coarsely approximated using the database and the remainder can be obtained using another gearbox.  Asymptotically, such an approach has scaling $\Theta((1/\epsilon)^{1/\log_2(1/\delta)})$, making it practical only for a limited range of $\epsilon > 10\delta^2$.

\subsubsection{Results: Decomposition with non-axial rotations
\label{sec:applications.database.non-axial}
}

While it suffices to use three $Z$-axis rotations and two Hadamard gates to decompose an arbitrary single-qubit non-axial rotation, this process, used by \cite{Kliuchnikov2012b}, \cite{Selinger2012a} and \cite{Ross2014}, incurs a factor of three increase in cost, since each axial rotation must in turn be decomposed. This effect is illustrated in~\figref{fig:kmm-ratios} by the larger ratios for non-axial unitaries.
Using just our axial database for non-axial unitary decomposition results in a similar increase in cost.
Although Fowler's method~\cite{Fowl04c} does not incur the additional cost for arbitrary unitaries, maintaining a scaling of $2.95\log_2(1/\epsilon) + 3.75$, the method is exponential and does not achieve exact implementation for many unitaries.
RUS circuits, on the other hand, offer a large domain of exactly implementable unitaries.
As~\figref{fig:kmm-ratios} suggests, composing both axial and non-axial RUS circuits could yield better approximations than using $Z$-axis rotations alone.

Construction of the database in the non-axial case is significantly more challenging than in the axial case.  First, unitaries must be represented by three rotation angles instead of one.
Second, composition of circuits requires multiplication in the non-axial case, which is less efficient than for the $Z$-axis case which only requires addition.
Third, organization of the database to enable efficient lookup is more complicated;
$Z$-axis rotations can be sorted by rotation angle, while arbitrary unitaries require a more complicated data structure such as a $k$-d tree~\cite{Dawson2005,Amy2013a}.

However, we can express each unitary by its Clifford equivalence class representative~\eqnref{eq:clifford-equiv-representative}, and also avoid conjugating by all $576$ pairs of Clifford gates.
Since any single-qubit Clifford can be written as a product $g_1g_2$ where $g_1\in G_1$, $g_2 \in G_2$ and
\begin{equation}\begin{split}
G_1 &= \{I,Z,S,S^\dagger\}\\
G_2 &= \{I,H,X,XH,HS,XHS,HSH,XHSH\}
\enspace ,
\end{split}\end{equation}
then we need only conjugate by $G_2$.
Each resulting unitary can then be decomposed into three rotations
\begin{equation}
g_2 U g'_2 = R_Z(\theta_1)R_X(\theta_2)R_Z(\theta_3)
\enspace .
\end{equation}
The Clifford gates in $G_1$ are diagonal and only modify $\theta_1$ and $\theta_3$.  Up to conjugation by elements of $G_1$, we have
\begin{equation}
R_Z(\theta_1)R_X(\theta_2)R_Z(\theta_3) \equiv R_Z(\theta_1\!\!\!\mod \pi/2)R_X(\theta_2)R_Z(\theta_3\!\!\!\mod \pi/2)
\enspace .
\end{equation}
Choosing $0 \leq \theta_1, \theta_2 < \pi/2$, we can find an equivalence class representative without actually conjugating by $G_1$, saving a factor of $576/64 = 9$.

Using these optimizations, we construct a database of size $45526$ containing all RUS circuits with expected $T$ count at most $18$.
We calculated the best circuit for $100$ random single-qubit unitaries for a range of $\epsilon \geq 8\times 10^{-3}$.
A fit-curve of the data yields a scaling of $\text{Exp}_U[T] = 2.4\log_2(1/\epsilon) - 3.28$.
Based on the slope, the savings is roughly $18$ percent over Fowler;
in absolute terms, the savings is roughly a factor of two for modest approximation accuracy. See~\figref{fig:rus-nonaxial-decomposition}.
Given the relatively large ratios for non-axial unitaries in~\figref{fig:kmm-ratios} and the fact that our database contains only a limited subset of possible RUS circuits, by incorporating a larger set of circuits, we expect the scaling to further improve.

\begin{figure}
\centering
\includegraphics{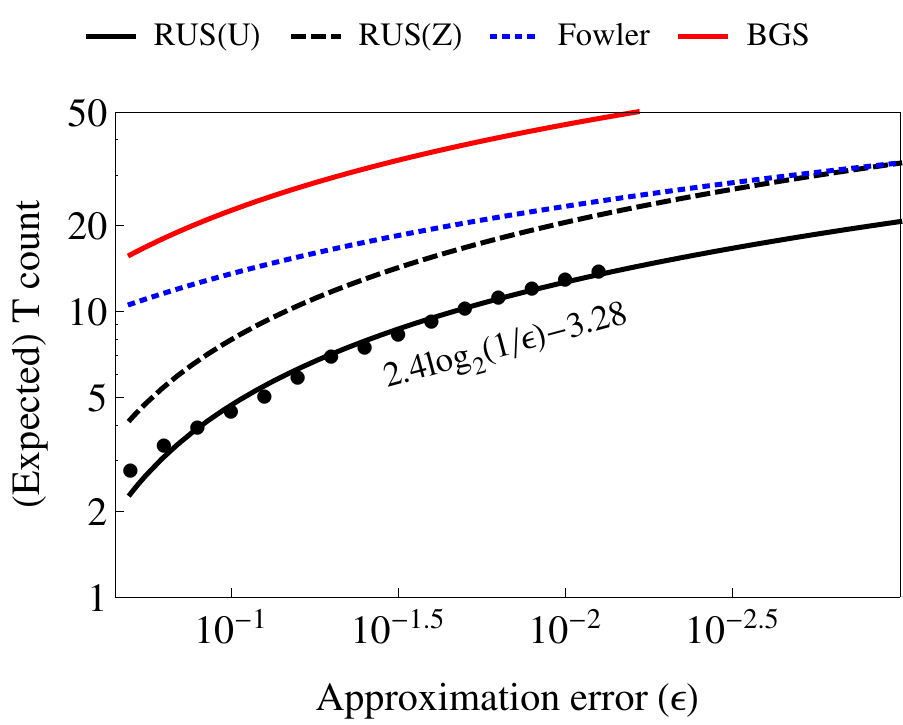}
\caption{\label{fig:rus-nonaxial-decomposition}
The expected number of $T$ gates required to approximate an arbitrary single-qubit unitary to within distance $\epsilon$.  Each point indicates the mean of $100$ random unitaries approximated to the corresponding accuracy with our full database of RUS circuits.  With $95$ percent confidence, the solid black line has slope in the range $[2.29,2.51]$. The dashed black line indicates the estimated cost of first expressing the unitary as a product of axial rotations, and then decomposing each rotation using the $Z$-axis RUS database from~\secref{sec:applications.database.axial}. The solid red line indicates the scaling obtained by using the circuit in~\figref{fig:v3-5.26} for $V_3$ decomposition~\cite{Bocharov2013}.  The scaling is worse than the others, but is valid for $\epsilon \geq 10^{-10}$. The estimated scaling using exponential direct search (Fowler~\cite{Fowl04c}) is shown for reference.
}
\end{figure}

\subsection{Quantum algorithms using coarse angles
\label{sec:applications.phase-estimation}
}
The accuracy of our decomposition method is limited by the size of the database.  Our $Z$-axis rotation database is capable of approximating arbitrary rotations up to an accuracy of $10^{-6}$.  To achieve higher accuracy, either the database must be expanded, or an algorithmic decomposition such as that of~\secref{sec:applications.v3} must be used.  However, a variety of important quantum algorithms require only limited rotation accuracies.  Fowler, for example, used numerical analysis to argue that Shor's algorithm requires rotation angles no smaller than $\theta = \pi/64\approx 0.05$ with an with an approximation error of $\epsilon = \pi/256 \approx 0.012$~\cite{Fowl03b}.

Another application of coarse angles is in quantum chemistry.  Consider a Hamiltonian for a molecule expressed in second quantized form, where the objective is to determine the ground state energy of the molecule.  Wecker et al.~\cite{Wecker2013} have developed a technique to obtain an estimate of the energy using only angles at most $10^{-6}$ accuracy in the phase estimation algorithm.
Similarly, Jones et al. show how to optimize quantum chemistry simulations by ignoring terms with small norm~\cite{Jones2012}. They use $Z$-axis rotations with approximation accuracies in the range $\epsilon = 10^{-5}$.
For such algorithms, our method produces rotations at the desired accuracy using extremely few $T$ gates.

\section{Conclusions and future work}
\label{sec:concl}
We have presented a general framework of non-deterministic circuits called ``Repeat-Until-Success" (RUS) circuits, and characterize unitaries which can be exactly represented by a RUS circuit.
Traditional methods decompose single-qubit unitaries into deterministic sequences of gates.
Wiebe and Kliuchnikov showed that by adding measurements and allowing non-deterministic circuits, decompositions with fewer $T$ gates are possible (in expectation) for very small $Z$-axis rotations \cite{Wiebe2013}.  Our results extend that conclusion to arbitrary single-qubit unitaries.  By synthesizing RUS circuits and then composing them, we can approximate arbitrary single-qubit unitaries to within a distance of $10^{-6}$, which is sufficient for many quantum algorithms. Approximation accuracy can be improved by combining our circuits with those of~\cite{Wiebe2013}.  For a random $Z$-axis rotation, our technique yields an approximation which requires as little as one-third as many $T$ gates as~\cite{Selinger2012a}, \cite{Kliuchnikov2012b}, \cite{Ross2014}, and~\cite{Fowl04c}.  Composing axial and non-axial RUS circuits yields even larger improvements in $T$ count costs, where the approximation accuracy is limited by the size of the database.

Our results suggest a number of possible areas for further research.
First, circuits of the form shown in~\figref{fig:two-cz-canonical} make up only a subset of possible RUS circuits.  Expanding the search to include additional types of circuits could improve database density.
Second, a formal number-theoretic characterization of RUS circuits needs to be made.
A theoretical understanding could lead to efficient decomposition algorithms based on RUS circuits and allow for approximation to much smaller values of $\epsilon$.

Extensions of the RUS circuit framework to multi-qubit unitaries or non-unitary channels should also be considered.  In addition, we have restricted the setting to recovery operations that are Clifford operators.  That restriction could be modified to allow for larger or alternative classes of operations.  On the other hand, fault-tolerance schemes based on stabilizer codes often permit the application of Pauli operators~\cite{Knill2004} at no cost.  Thus, it might be sensible to limit recovery operations to only tensor products of Paulis.

Finally, the non-deterministic nature of RUS circuits imposes some additional constraints on the overall architecture of the quantum computer. Many fault-tolerance schemes already use non-deterministic methods to implement certain gates.  But most of the non-determinism occurs ``offline'', without impacting the computational data qubits.  Since RUS circuits are ``online'', the time required to implement a given unitary cannot be determined in advance.  Such asynchronicity will require extensive placement and routing techniques and classical control logic.  Architecture-specific analysis will be required in order to concretely assess the benefits of using RUS circuits.

\section*{Acknowledgements}
The authors extend thanks to Vadym Kliuchnikov, Alex Bocharov, Nathan Wiebe, Yuri Gurevich, Andreas Blas, David Gosset and Cody Jones for helpful discussions, and to Dave Wecker for assistance with the implementation of the direct search.   Thanks also to Robin Kothari for suggesting the amplitude amplification technique.  AEP would like to thank Microsoft Research and the entire QuArC group for their hospitality.

\bibliographystyle{alpha-eprint}
\bibliography{library}

\newcommand{\etalchar}[1]{$^{#1}$}
\begin{thebibliography}{WGMAG13}
\expandafter\ifx\csname urlprefix\endcsname\relax\def\urlprefix{URL }\fi
\providecommand{\arxiv}[2][]{\href{http://arxiv.org/pdf/#2}{\texttt{arXiv:#2}}}
\providecommand{\doi}[2][]{\href{http://dx.doi.org/#2}{\texttt{doi:#2}}}

\bibitem[AMMR12]{Amy2012}
Matthew Amy, Dmitri Maslov, Michele Mosca, and Martin Roetteler.
\newblock {A meet-in-the-middle algorithm for fast synthesis of depth-optimal
  quantum circuits}.
\newblock 2012, \href{http://www.arxiv.org/abs/1206.0758}{{\tt
  arXiv:1206.0758}}.

\bibitem[Amy13]{Amy2013a}
Matthew Amy.
\newblock {\em {Algorithms for the Optimization of Quantum Circuits}}.
\newblock Master's thesis, University of Waterloo, 2013.

\bibitem[AOK{\etalchar{+}}10]{Anders2009}
Janet Anders, Daniel Kuan~Li Oi, Elham Kashefi, Dan~E. Browne, and Erika
  Andersson.
\newblock {Ancilla-Driven Universal Quantum Computation}.
\newblock {\em Physical Review A}, 82:020301, 2010,
  \href{http://www.arxiv.org/abs/0911.3783}{{\tt arXiv:0911.3783}}.

\bibitem[BCC{\etalchar{+}}13]{Berry2013a}
Dominic~W. Berry, Andrew~M. Childs, Richard Cleve, Robin Kothari, and
  Rolando~D. Somma.
\newblock {Exponential improvement in precision for simulating sparse
  Hamiltonians}.
\newblock 2013, \href{http://www.arxiv.org/abs/1312.1414}{{\tt
  arXiv:1312.1414}}.

\bibitem[BGS13]{Bocharov2013}
Alex Bocharov, Yuri Gurevich, and Krysta~M. Svore.
\newblock {Efficient Decomposition of Single-Qubit Gates into V Basis
  Circuits}.
\newblock {\em Physical Review A}, 88:012313, 2013,
  \href{http://www.arxiv.org/abs/1303.1411}{{\tt arXiv:1303.1411}}.

\bibitem[BHMT00]{Brassard2000}
Gilles Brassard, Peter H{\o}yer, Michele Mosca, and Alain Tapp.
\newblock {Quantum Amplitude Amplification and Estimation}.
\newblock 2000, \href{http://www.arxiv.org/abs/0005055}{{\tt arXiv:0005055}}.

\bibitem[BS12]{Bocharov2012}
Alex Bocharov and Krysta~M. Svore.
\newblock {A Depth-Optimal Canonical Form for Single-qubit Quantum Circuits}.
\newblock {\em Physical Review Letters}, 109:19050, 2012,
  \href{http://www.arxiv.org/abs/1206.3223}{{\tt arXiv:1206.3223}}.

\bibitem[CAB12]{Campbell2012}
Earl~T. Campbell, Hussain Anwar, and Dan~E. Browne.
\newblock {Magic state distillation in all prime dimensions using quantum
  Reed-Muller codes}.
\newblock {\em Physical Review X}, 2:041021, 2012,
  \href{http://www.arxiv.org/abs/1205.3104}{{\tt arXiv:1205.3104}}.

\bibitem[DN05]{Dawson2005}
Christopher~M. Dawson and Michael~A. Nielsen.
\newblock {The Solovay-Kitaev algorithm}.
\newblock {\em Quantum Information and Computation}, 6(1):81--95, 2005,
  \href{http://www.arxiv.org/abs/0505030}{{\tt arXiv:0505030}}.

\bibitem[DS12]{Duclos-Cianci2012}
Guillaume {Duclos-Cianci} and Krysta~M. Svore.
\newblock {A State Distillation Protocol to Implement Arbitrary Single-qubit
  Rotations}.
\newblock {\em Physical Review A}, 88:042325, 2012,
  \href{http://www.arxiv.org/abs/1210.1980}{{\tt arXiv:1210.1980}}.

\bibitem[FDJ13]{Fowler2013}
Austin~G. Fowler, Simon~J. Devitt, and Cody Jones.
\newblock {Surface code implementation of block code state distillation}.
\newblock {\em Scientific reports}, 3(1939), 2013,
  \href{http://www.arxiv.org/abs/1301.7107}{{\tt arXiv:1301.7107}}.

\bibitem[FH04]{Fowl03b}
Austin~G. Fowler and Lloyd C.~L. Hollenberg.
\newblock {Scalability of Shor's algorithm with a limited set of rotation
  gates}.
\newblock {\em Physical Review A}, 70:32329, 2004,
  \href{http://www.arxiv.org/abs/0306018}{{\tt arXiv:0306018}}.

\bibitem[Fow11]{Fowl04c}
Austin~G. Fowler.
\newblock {Constructing arbitrary Steane code single logical qubit
  fault-tolerant gates}.
\newblock {\em Quantum Information and Computation}, 11:867--873, 2011,
  \href{http://www.arxiv.org/abs/0411206}{{\tt arXiv:0411206}}.

\bibitem[GKMR13]{Gosset2013a}
David Gosset, Vadym Kliuchnikov, Michele Mosca, and Vincent Russo.
\newblock {An algorithm for the T-count}.
\newblock 2013, \href{http://www.arxiv.org/abs/1308.4134}{{\tt
  arXiv:1308.4134}}.

\bibitem[GN13]{Gosset2013}
David Gosset and Daniel Nagaj.
\newblock {Quantum 3-SAT is QMA1-complete}.
\newblock 2013, \href{http://www.arxiv.org/abs/1302.0290}{{\tt
  arXiv:1302.0290}}.

\bibitem[GS12]{Giles2012}
Brett Giles and Peter Selinger.
\newblock {Exact synthesis of multi-qubit Clifford+T circuits}.
\newblock {\em Physical Review A}, 87, 032332, 2012,
  \href{http://www.arxiv.org/abs/1212.0506}{{\tt arXiv:1212.0506}}.

\bibitem[Jon13a]{Jones2013b}
Cody Jones.
\newblock {\em {Logic synthesis for fault-tolerant quantum computers}}.
\newblock PhD thesis, Stanford University, 2013,
  \href{http://www.arxiv.org/abs/1310.7290}{{\tt arXiv:1310.7290}}.

\bibitem[Jon13b]{Jones2012d}
Cody Jones.
\newblock {Low-overhead constructions for the fault-tolerant Toffoli gate}.
\newblock {\em Physical Review A}, 87, 022328, 2013,
  \href{http://www.arxiv.org/abs/1212.5069}{{\tt arXiv:1212.5069}}.

\bibitem[JWM{\etalchar{+}}12]{Jones2012}
Cody Jones, James~D. Whitfield, Peter~L. McMahon, Man-Hong Yung, Rodney {Van
  Meter}, Al\'{a}n Aspuru-Guzik, and Yoshihisa Yamamoto.
\newblock {Simulating chemistry efficiently on fault-tolerant quantum
  computers}.
\newblock {\em New Journal of Physics}, 14, 115023, 2012,
  \href{http://www.arxiv.org/abs/1204.0567}{{\tt arXiv:1204.0567}}.

\bibitem[Kit97]{Kitaev1997a}
Alexei~Y. Kitaev.
\newblock {Quantum computations: algorithms and error correction}.
\newblock {\em Russian Mathematical Surveys}, 52(6):1191--1249, 1997.

\bibitem[Kli13]{Kliuchnikov2013a}
Vadym Kliuchnikov.
\newblock {Synthesis of unitaries with Clifford+T circuits}.
\newblock 2013, \href{http://www.arxiv.org/abs/1306.3200}{{\tt
  arXiv:1306.3200}}.

\bibitem[KMM12a]{Kliuchnikov2012}
Vadym Kliuchnikov, Dmitri Maslov, and Michele Mosca.
\newblock {Fast and efficient exact synthesis of single qubit unitaries
  generated by Clifford and T gates}.
\newblock {\em Quantum Information and Computation}, 13(7{\&}8):607--630, 2012,
  \href{http://www.arxiv.org/abs/1206.5236}{{\tt arXiv:1206.5236}}.

\bibitem[KMM12b]{Kliuchnikov2012b}
Vadym Kliuchnikov, Dmitri Maslov, and Michele Mosca.
\newblock {Practical approximation of single-qubit unitaries by single-qubit
  quantum Clifford and T circuits}.
\newblock 2012, \href{http://www.arxiv.org/abs/1212.6964}{{\tt
  arXiv:1212.6964}}.

\bibitem[Kni95]{Knill1995}
Emanuel Knill.
\newblock {Approximation by Quantum Circuits}.
\newblock Technical Report LAUR-95-2225, Los Alamos National Laboratory, 1995,
  \href{http://www.arxiv.org/abs/9508006}{{\tt arXiv:9508006}}.

\bibitem[Kni05]{Knill2004}
Emanuel Knill.
\newblock {Quantum Computing with Very Noisy Devices}.
\newblock {\em Nature}, 434(7029):39--44, 2005,
  \href{http://www.arxiv.org/abs/0410199}{{\tt arXiv:0410199}}.

\bibitem[KOB{\etalchar{+}}09]{Kashefi2009}
Elham Kashefi, Daniel Kuan~Li Oi, Daniel~E. Browne, Janet Anders, and Erika
  Andersson.
\newblock {Twisted graph states for ancilla-driven quantum computation}.
\newblock {\em Proc. 25th Conference on the Mathematical Foundations of
  Programming Semantics (MFPS 25), ENTCS}, 249:307--331, 2009,
  \href{http://www.arxiv.org/abs/0905.3354}{{\tt arXiv:0905.3354}}.

\bibitem[KSV02]{Kita02}
Alexei~Y. Kitaev, Alexander~H. Shen, and Mikhail~N. Vyalyi.
\newblock {\em {Classical and Quantum Computation}}.
\newblock American Mathematical Society, Providence, RI, 2002.

\bibitem[LBK04]{Lim2004}
Yuan~Liang Lim, Almut Beige, and Leong~Chuan Kwek.
\newblock {Repeat-Until-Success Quantum Computing}.
\newblock {\em Physical Review Letters}, 95, 030505, 2004,
  \href{http://www.arxiv.org/abs/0408043}{{\tt arXiv:0408043}}.

\bibitem[NC00]{Nielsen2000}
Michael~A. Nielsen and Isaac~L. Chuang.
\newblock {\em {Quantum Computation and Quantum Information}}.
\newblock Cambridge University Press, 2000.

\bibitem[RHG07]{Raussendorf2007a}
Robert Raussendorf, Jim Harrington, and Kovid Goyal.
\newblock {Topological fault-tolerance in cluster state quantum computation}.
\newblock {\em New Journal of Physics}, 9(6):199--199, 2007,
  \href{http://www.arxiv.org/abs/0703143}{{\tt arXiv:0703143}}.

\bibitem[RS14]{Ross2014}
Neil~J. Ross and Peter Selinger.
\newblock {Optimal ancilla-free Clifford+T approximation of z-rotations}.
\newblock 2014, \href{http://www.arxiv.org/abs/1403.2975}{{\tt
  arXiv:1403.2975}}.

\bibitem[Sel12]{Selinger2012a}
Peter Selinger.
\newblock {Efficient Clifford+T approximation of single-qubit operators}.
\newblock 2012, \href{http://www.arxiv.org/abs/1212.6253}{{\tt
  arXiv:1212.6253}}.

\bibitem[SO13]{Shah2013}
Kerem~Halil Shah and Daniel Kuan~Li Oi.
\newblock {Ancilla Driven Quantum Computation with arbitrary entangling
  strength}.
\newblock In {\em Proc. 8th Conference on the Theory of Quantum Computation,
  Communication and Cryptography (TQC 2013)}, 2013,
  \href{http://www.arxiv.org/abs/1303.2066}{{\tt arXiv:1303.2066}}.

\bibitem[WBCT13]{Wecker2013}
Dave Wecker, Bela Bauer, Bryan Clark, and Matthias Troyer.
\newblock {In preparation}.
\newblock 2013.

\bibitem[WGMAG13]{Welch2013}
Jonathan Welch, Daniel Greenbaum, Sarah Mostame, and Al\'{a}n Aspuru-Guzik.
\newblock {Efficient Quantum Circuits for Diagonal Unitaries Without Ancillas}.
\newblock 2013, \href{http://www.arxiv.org/abs/1306.3991}{{\tt
  arXiv:1306.3991}}.

\bibitem[WK13]{Wiebe2013}
Nathan Wiebe and Vadym Kliuchnikov.
\newblock {Floating point representations in quantum circuit synthesis}.
\newblock {\em New Journal of Physics}, 15:093041, 2013,
  \href{http://www.arxiv.org/abs/1305.5528}{{\tt arXiv:1305.5528}}.

\end{thebibliography}

\end{document}